\newtheorem{theorem}{Theorem}
\newtheorem{example}{Example}
\newtheorem{definition}{Definition}
\newtheorem{lemma}{Lemma}
\newtheorem{corollary}{Corollary}
\newtheorem{proposition}{Proposition}
\begin{document}

\title{Mixed Covering Arrays on 3-Uniform Hypergraphs }

\author{Yasmeen Akhtar$~~~~$ Soumen Maity\\
Indian Institute of Science Education and Research\\
Pune, India\\}
%\footnote {Email: yasmeensa@students.iiserpune.ac.in, soumen@iiserpune.ac.in}
\date{}
\maketitle

\begin{center}\textbf{\large Abstract}
\end{center}{\it Covering arrays are combinatorial objects that have been successfully applied in the design of test suites for testing systems such as software, circuits and networks,
where failures can be caused by the interaction between their parameters. In this paper, we perform a new generalization of covering arrays called covering arrays on 3-uniform hypergraphs.
Let $n, k$ be positive integers with $k\geq 3$. Three vectors $x\in \mathbb Z_{g_1}^n$,  $y\in \mathbb Z_{g_2}^n$, $z\in \mathbb Z_{g_3}^n$ are {\it 3-qualitatively independent} if for any triplet $(a, b, c) \in  \mathbb Z_{g_1}\,\times\, \mathbb Z_{g_2}\,\times\,\mathbb Z_{g_3}$,
there exists an index $ j\in  \lbrace 1, 2,...,n \rbrace $ such that $( x(j), y(j), z(j)) = (a, b, c)$. Let $H$ be a 3-uniform hypergraph with $k$ vertices $v_1,v_2,\ldots,v_k$ with 
respective vertex weights $g_1,g_2,\ldots,g_k$. A mixed covering array on  $H$, denoted by  $3-CA(n,H, \prod_{i=1}^{k}g_{i})$, is a $k\times n$ 
array such that row $i$ corresponds to vertex $v_i$, entries in row $i$ are from $Z_{g_i}$; and if $\{v_x,v_y,v_z\}$ is a hyperedge in $H$, then the rows $x,y,z$ are 3-qualitatively 
independent.  The parameter $n$ is called the size of the array. Given a weighted  3-uniform hypergraph $H$, a mixed covering array on $H$ with minimum size is called optimal. 
We outline necessary background in the theory of hypergraphs  that is relevant to the study of covering arrays on
hypergraphs. In this article, we introduce five basic hypergraph operations to construct optimal mixed covering arrays on hypergraphs. Using these operations,  we provide constructions for optimal mixed covering arrays on $\alpha$-acyclic 3-uniform hypergraphs, conformal 3-uniform hypertrees having a binary tree as host tree,  and on some 
specific 3-uniform cycle hypergraphs. }\\

\noindent {\bf Keywords:} Covering arrays, host graph,  conformal 3-uniform hypertrees, $\alpha$-acyclic 3-uniform hypergraphs, 3-uniform cycles, software testing.

\section{Introduction}
Covering arrays have been extensively studied and have been the topic of interest of many researchers.
These interesting mathematical structures are generalizations of well known orthogonal arrays \cite{heday}.
 A covering array of strength three, denoted by 3-$CA( n, k, g )$, is an $k  \times  n $ array $C$ with entries from $\mathbb Z_{g}$
such that any three distinct rows of $ C $ are 3-qualitatively independent. The parameter $n$ is called the size of the array. {\it One of the main problems on covering arrays is to construct a
3-$CA(n,k,g)$ for given parameters $(k,g)$ so that the size $n$ is as small as possible. }  The covering array number 3-$CAN(k,g)$ is the smallest $n$ for which a 3-$CA(n,k,g)$ exists, that is
\begin{center}
3-$CAN( k, g ) = min_{n\in\mathbb N}\,\lbrace  n ~|~ \exists $ 3-$CA(n,k,g)\,\rbrace$.
\end{center}
 A 3-$CA( n, k, g ) $ of
size $ n  = $ 3-$CAN( k, g )$ is called $ optimal $.
An example of a strength three covering array 3-$CA( 10, 5, 2 )$ is shown below \cite{kreher}:
\begin{center}
1\hspace{3mm}0\hspace{3mm}1\hspace{3mm}0\hspace{3mm}1\hspace{3mm}0\hspace{3mm}0\hspace{3mm}0\hspace{3mm}1\hspace{3mm}1\\
1\hspace{3mm}0\hspace{3mm}1\hspace{3mm}0\hspace{3mm}0\hspace{3mm}1\hspace{3mm}0\hspace{3mm}1\hspace{3mm}0\hspace{3mm}1\\
1\hspace{3mm}0\hspace{3mm}0\hspace{3mm}1\hspace{3mm}0\hspace{3mm}0\hspace{3mm}1\hspace{3mm}1\hspace{3mm}1\hspace{3mm}0\\
1\hspace{3mm}0\hspace{3mm}0\hspace{3mm}1\hspace{3mm}1\hspace{3mm}1\hspace{3mm}0\hspace{3mm}0\hspace{3mm}0\hspace{3mm}1\\
1\hspace{3mm}0\hspace{3mm}1\hspace{3mm}0\hspace{3mm}1\hspace{3mm}1\hspace{3mm}1\hspace{3mm}0\hspace{3mm}0\hspace{3mm}0\\
\end{center}
\noindent There is a vast array of literature \cite{Hartman,Col1,Colbourn,chatea,kreher,karen} on covering arrays, and the problem of determining the minimum size of covering arrays has been studied under many guises over the past thirty years.

Covering arrays have applications in many areas. Covering arrays are particularly useful in the design of test suites \cite{Hartman,Cohen,patton,maity,maity1, maity2}.
The testing application is based on the following translation. Consider a software system that has $k$ parameters, each parameter can take $g$ values. Exhaustive testing would require $g^k$ test cases for detecting software failure, but if $k$ or $g$  are reasonably large, this may be infeasible. We wish to build a test suite that tests all 3-way interactions of parameters with the minimum number  of test cases.
 Covering arrays of strength 3 provide compact test suites that guarantee 3-way coverage of parameters.

Several generalizations of covering arrays have been proposed in order to address different requirements of the testing application (see \cite{Col1,HartmanDM} ).
{\it Mixed covering arrays} are a generalization of covering arrays that allows different values for different rows. This meets the requirement that different parameters in the system
may take a different number of possible values. Constructions for mixed covering arrays are given in \cite{c.j,moura}. Another generalization of
covering arrays  are {\it mixed covering arrays on hypergraph}. In these arrays, only specified choices of distinct rows need to be
qualitatively independent and these
choices are recored in hypergraph. As mentioned in \cite{mixed}, this is useful in situations in which some combinations of parameters do not interact;
in these cases, we do not insist that these interactions to be tested, which allows reductions in the number of required test cases.
This has been applied in the context of software testing by observing that we only need to test interactions between parameters that jointly effect one of the output values \cite{cheng}.
Covering arrays on graphs were first studied by Serroussi and Bshouty \cite{ser}, who showed that finding an optimal covering array on a graph is NP-hard for the binary case.
Covering arrays on general alphabets have been systematically studied in Steven's thesis \cite{ste}.
Meagher and
Stevens \cite{karen}, and Meagher, Moura, and Zekaoui \cite{mixed} studied  strength two (mixed) covering arrays on graphs in more details and gave many powerful results.
Variable strength covering arrays have been introduced and systematically studied  in Raaphorst's thesis \cite{sebastian}.

\par In this paper, we extend the work done by  Meagher, Moura, and Zekaoui \cite{mixed} for mixed covering arrays on graph to mixed covering arrays on hypergarphs. The motivation 
for this generalisation is to improve applications of covering arrays to software, circuit and network systems. This extension also gives us new ways to study covering arrays construction.  In Section 2, we outline necessary background in the theory of hypergraphs and mixed covering arrays that are relevant to the study of mixed covering arrays on
hypergraphs. In Section 3,  we present  results related to  balanced and pairwise balanced vectors which are required for  basic hypergraph operations. In section 4,    we introduce four basic hypergraph operations. Using these operations, we construct optimal mixed covering arrays on $\alpha$-acyclic 3-uniform hypergraphs, conformal 3-uniform 
hypertrees having a binary tree as host tree, some specific 3-uniform cycles. In Section 5, we build optimal mixed covering arrays on 3-uniform cycles with exactly one vertex of degree one. 

\section{Mixed covering arrays and  hypergraphs}

A mixed covering array is a generalization of covering array that allows  different alphabets in different rows. 

\begin{definition} (Mixed Covering Array) Let $n,k,g_1, \ldots,g_k$ be positive integers. A mixed covering array of strength three, denoted by $3-CA(n,k, \prod_{i=1}^k{g_i})$ is an $k\times n$ array $C$
with entries from $\mathbb Z_{g_i}$ in row $i$, such that any three distinct rows of $C$ are 3-qualitatively independent.
\end{definition}

\noindent The parameter $n$ is called the size of the array. An obvious lower bound for the size of a covering array is $g_ig_jg_k$ where $g_i$, $g_j$, $g_k$ are the
largest three alphabets, in order to guarantee that the corresponding three rows be 3-qualitatively independent.

\begin{definition} (Hypergraphs \cite{berge})  A hypergraph $H$ is a pair  $H = (V,E)$ where $V=\{v_1,v_2,\ldots,v_k\}$ is a set of elements called nodes or vertices, 
and $E=\{E_1,E_2,\ldots,E_m\}$ is a set of non-empty subsets of $V$,  called hyperedges, such that 
$$E_i\neq \emptyset ~~~~~(i=1,2,\ldots m)$$
$$ \bigcup_{i=1}^m {E_i}=V.$$
A simple hypergraph is a hypergraph $H$ such that $$E_i\subset E_j  \Rightarrow i=j.$$

\end{definition}

\noindent If cardinality of every hyperedge of $H$ is equal to $r$ then $H$ is called $r$-uniform hypergraph. A complete $r$-uniform hypergraph containing $k$ vertices,
denoted by $K_k^r$, is a hypergraph having every $r$-subset of set of vertices as hyperedge. For a set $J \subset\{1,2,...,m\}$, 
the partial hypergraph generated by $J$ is the hypergraph
$\left(V, \lbrace E_i | i\in J \rbrace \right)$.
 For a set $A\subset V$, the subhypergraph $H_A$ induced by $A$  is defined as
$H_A=\left(A, \{E_j \cap A \mid 1\leq i \leq m, E_i \cap A \neq \emptyset\} \right)$.
The 2-section  of a hypergraph $H$  is the graph $[H]_2$ with the same vertices of the hypergraph, and edges between all 
pairs of vertices contained in the same hyperedge. 
\begin{definition}(Conformal Hypergraph \cite{berge})
 A hypergraph $H$ is conformal if all the maximal cliques of the graph $[H]_2$ are hyperedges of $H$.
\end{definition}

\begin{definition} (Tripartite 3-uniform hypergraph \cite{berge})
A tripartite 3-uniform hypergraph is a  3-uniform hypergraph in which the set of vertices is $V_1 \cup V_2\cup V_3$ and the hyperedges are the 3-tuples $ \lbrace v_{1}, v_{2},v_{3}\rbrace $ with $ v_i \in V_{i} $ for $ i = 1,2,3 $.
\end{definition}

\begin{definition}\cite{berge} 
  Let $H$ be a hypergraph on $V$, and let $k\geq 2$ be an integer. A cycle of length $k$ is a sequence 
  $(v_1,E_1,v_2,E_2,...,v_k,E_k,v_1)$ with:
  \begin{enumerate}
   \item $E_1,E_2,...,E_k$ distinct hyperedges of $H$;
   \item $v_1,v_2,...,v_k$ distinct vertices of $H$;
   \item $v_i,v_{i+1}\in E_i$  for $i=1,2,\ldots,k-1$;
   \item $v_k,v_1 \in E_k$.
  \end{enumerate}
\end{definition}

\begin{definition} (Balanced Hypergraphs \cite{berge})
A hypergraph is said to be balanced if every odd cycle has a hyperedge containing three vertices of the cycle.
\end{definition}

\begin{theorem}\label{2color}\cite{berge} A hypergraph is balanced if and only if its induced subhypergraphs are 2-colourable.
\end{theorem}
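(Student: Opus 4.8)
The plan is to prove both implications, reducing the forward direction to a single core lemma and settling the converse directly. First I would record that balancedness is hereditary: every induced subhypergraph $H_A$ of a balanced hypergraph $H$ is itself balanced. Indeed, an odd cycle $(w_1,F_1,\ldots,w_{2t+1},F_{2t+1},w_1)$ in $H_A$, where $F_i=E_{j_i}\cap A$, lifts to the odd cycle $(w_1,E_{j_1},\ldots,w_{2t+1},E_{j_{2t+1}},w_1)$ in $H$ (the $E_{j_i}$ are distinct because their traces on $A$ are, and each $w_i,w_{i+1}\in E_{j_i}$). Balancedness of $H$ furnishes some $E_{j_i}$ containing three of the cycle vertices; as these lie in $A$, the trace $F_i$ contains them too. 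Consequently the forward implication ``balanced $\Rightarrow$ every induced subhypergraph is $2$-colourable'' reduces to the single statement that \emph{every balanced hypergraph is $2$-colourable}, applied to each induced subhypergraph (each of which is balanced).

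For the converse I would argue by contrapositive, and it is short. Suppose $H$ is not balanced, witnessed by an odd cycle $(v_1,E_1,\ldots,v_k,E_k,v_1)$ with $k$ odd and no hyperedge containing three of the vertices $A=\{v_1,\ldots,v_k\}$. Then $|E\cap A|\le 2$ for every edge $E$, so in $H_A$ each trace $E_i\cap A$ equals exactly $\{v_i,v_{i+1}\}$ (indices mod $k$). These are distinct size-two edges forming the odd graph-cycle $C_k$ inside $H_A$. Any proper $2$-colouring of $H_A$ would in particular properly $2$-colour $C_k$, which is impossible for odd $k$. Hence $H_A$ is an induced subhypergraph that is not $2$-colourable, establishing the contrapositive.

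The core lemma---every balanced hypergraph $H$ is $2$-colourable---is where the real work lies, and I would prove it by an extremal/exchange argument. Among all red--blue colourings of $V$, pick one minimising the number of monochromatic hyperedges, and assume for contradiction this minimum is positive, with an all-red monochromatic edge $A_0$. Minimality forces that recolouring any single vertex cannot decrease the count; from this I would extract, step by step, an alternating sequence of edges $A_0,B_0,A_1,B_1,\ldots$ (monochromatic red and monochromatic blue alternately) linked through shared vertices whose existence is forced by the minimality condition. Since $H$ is finite the sequence must close up, producing a cycle in the hypergraph sense; tracking the red/blue alternation shows this cycle has odd length. Balancedness then supplies a hyperedge meeting the cycle in three vertices, and I would show that such an edge permits a simultaneous recolouring of a suitable set of cycle vertices that strictly decreases the number of monochromatic edges, contradicting minimality.

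The main obstacle is precisely this alternating-cycle construction inside the core lemma: one must organise the forced edges so that the resulting closed walk uses distinct vertices and distinct hyperedges (to be a legitimate cycle), verify its odd parity from the colour-alternation bookkeeping, and then convert the balancedness-guaranteed three-vertex edge into a genuine improvement of the colouring. The heredity reduction and the converse are routine by comparison; all the subtlety of the theorem is concentrated in showing that balancedness blocks the only obstruction---an odd cycle with no edge on three of its vertices---to a proper $2$-colouring.
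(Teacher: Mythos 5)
The paper does not prove this statement at all --- it is quoted from Berge \cite{berge} as a known background result --- so your proposal has to be judged on its own terms rather than against an in-paper argument. Your reduction of the forward direction to the single claim that every balanced hypergraph is $2$-colourable (via heredity of balancedness under induced subhypergraphs) is correct, and your contrapositive argument for the converse is the standard one and is sound: if no edge of the offending odd cycle contains three cycle vertices, the traces of the cycle edges on the cycle's vertex set $A$ are exactly the pairs $\{v_i,v_{i+1}\}$, so $H_A$ contains an odd graph cycle and admits no proper $2$-colouring. (Your parenthetical claim that $|E\cap A|\le 2$ for \emph{every} edge $E$ is stronger than what non-balancedness gives you if the definition refers only to the cycle's own edges, but it is also not needed.)

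The genuine gap is the core lemma, which is where --- as you say yourself --- all the work lies, and which you only sketch. Two concrete problems. First, the structure that minimality actually forces is not an alternating sequence of monochromatic red and monochromatic blue edges: if $A_0$ is all red and you flip $v\in A_0$ to blue, what minimality yields is an edge $B\ni v$ with $B\setminus\{v\}$ entirely blue --- an edge that is \emph{not} monochromatic in the current colouring --- so the bookkeeping you propose (``monochromatic red and monochromatic blue alternately'') does not describe the objects your argument produces, and the parity count built on that description is not established. Second, even granting some alternating walk, you assert without proof that it ``must close up'' into a legitimate cycle (distinct vertices \emph{and} distinct hyperedges, as the cycle definition requires), that the resulting cycle has odd length, and that the three-vertex edge supplied by balancedness permits a recolouring that strictly decreases the number of monochromatic edges; each of these is a nontrivial claim and none is carried out. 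The assertion ``every balanced hypergraph is $2$-colourable'' is itself a theorem of Berge with a careful proof; as written, your argument for it is a plan rather than a proof, and the plan is inconsistent at its first step.
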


\noindent  A {\it vertex-weighted hypergraph} is a hypergraph with
a positive weight assigned to each vertex. We give here the definition of mixed covering array on hypergraph: 
\begin{definition} Let $ H $ be a vertex-weighted  hypergraph with $ k $ vertices and weights  $ g_{1} \leq  g_{2} \leq ... \leq g_{k}$, and let $ n $ be a positive integer.
A covering array on $ H $, denoted by $CA( n, H, \prod_{i=1}^{k}g_{i})$, is an $ k \times n $ array with the following properties:
\begin{enumerate}
\item the entries in row $i$ are from  $\mathbb Z_{g_{i}}$;

\item  row $ i $ corresponds to a vertex $ v_i \in V(H) $ with weight $g_{i} $;

\item  if $e=\{v_1,v_2,\ldots,v_t\}\in E(H)$, the rows correspond to vertices $v_1,v_2,\ldots,v_t$ are $t$-qualitatively independent. 
\end{enumerate}
\end{definition}

\noindent In this paper we concentrate on covering arrays on 3-uniform hypergraphs.  Given a weighted 3-uniform hypergraph $ H $ with weights $ g_{1}, g_{2},..., g_{k} $ a {\it strength-3 mixed covering array} on $H$ is denoted by 3-$CA( n, H, \prod_{i=1}^{k}g_{i} )$; {\it the strength-3 mixed covering array number} on $ H $,
denoted by 3-$CAN(H, \prod_{i=1}^{k}g_{i} )$, is the minimum $n$ for which there exists a 3-$CA( n, H, \prod_{i=1}^{k}g_{i} )$.
A 3-$CA( n, H, \prod_{i=1}^{k}g_{i} ) $ of size $ n$ = 3-$CAN( H, \prod_{i=1}^{k}g_{i} )$ is called $ optimal $.
A mixed covering array of strength three, denoted by 3-$CA( n, k, \prod_{i=1}^{k}g_{i} )$, is a 3-$CA( n, K_{k}^{3}, \prod_{i=1}^{k}g_{i} )$, where $ K_{k}^{3} $ is the complete 3-uniform hypergraph on $ k $ vertices with weights $ g_{i} $, for $ 1 \leq i \leq k $.

\section{Balanced and Pairwise Balanced Vectors}

In this section, we present several results related to  balanced and pairwise balanced vectors which are required for basic hypergraph operations defined in the next section. 
\begin{definition}
 A length-$n$ vector with alphabet size $g$ is  $ \textit{balanced} $ if each symbol occurs $ \lfloor n/g \rfloor $
or $ \lceil n/g \rceil $ times.
\end{definition}

\begin{definition}
 Two length-$ n $ vectors $ x_{1} $  and $ x_{2} $ with alphabet size $ g_{1} $ and $ g_{2} $ are $ \textit{pairwise balanced} $ if both vectors are
balanced and each pair of 
alphabets $ (a, b)\in \mathbb Z_{g_{1}} \times \mathbb Z_{g_{2}}$ occurs  $ \lfloor n/g_{1}g_{2} \rfloor $ or 
$ \lceil n/g_{1}g_{2} \rceil $ times in $ (x_{1}, x_{2}) $, so for $n\geq g_1g_2$ pairwise balanced vectors are always 2-qualitatively independent.
\end{definition}

\begin{definition} Let $H$ be a vertex-weighted hypergraph. 
A balanced covering array on  $H$ is 
 a covering array on $H$ in which every row is balanced and the rows correspond to vertices in a hyperedge are pairwise balanced. 
\end{definition}

\begin{lemma}\label{lemma1}
 Let $ x_{1} \in \mathbb Z_{g_{1}}^{n}$ and $ x_{2} \in \mathbb Z_{g_{2}}^{n} $ be two balanced vectors.
 Then for any positive integer $h$, there exists a balanced vector $y\in \mathbb Z_{h}^{n}$ 
such that $x_1$ and $y$ are pairwise balanced and $x_2$ and $y$  are pairwise balanced. 
\end{lemma}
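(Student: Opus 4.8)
The plan is to recast the problem as an equitable edge-colouring of a bipartite multigraph. First I would build a bipartite multigraph $G$ whose left vertex set is $\mathbb Z_{g_1}$ and whose right vertex set is $\mathbb Z_{g_2}$, adding for each coordinate $j\in\{1,\ldots,n\}$ one edge $e_j$ joining $x_1(j)$ to $x_2(j)$. Thus $G$ has exactly $n$ edges, the number of parallel edges between $a$ and $b$ is the number of coordinates with $(x_1(j),x_2(j))=(a,b)$, and the degree of a left vertex $a$ is $n_a:=|\{j:x_1(j)=a\}|$ while the degree of a right vertex $b$ is $m_b:=|\{j:x_2(j)=b\}|$. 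Since $x_1,x_2$ are balanced, $n_a\in\{\lfloor n/g_1\rfloor,\lceil n/g_1\rceil\}$ and $m_b\in\{\lfloor n/g_2\rfloor,\lceil n/g_2\rceil\}$. A choice of $y\in\mathbb Z_h^n$ is precisely a colouring of the edges of $G$ by the $h$ colours $\mathbb Z_h$, via $y(j)=\text{colour}(e_j)$.

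Next I would produce a balanced edge-colouring of $G$ with $h$ colours, meaning a colouring in which, at every vertex $v$, each colour is used $\lfloor d(v)/h\rfloor$ or $\lceil d(v)/h\rceil$ times, and in which each colour class globally has $\lfloor n/h\rfloor$ or $\lceil n/h\rceil$ edges. Such colourings of bipartite multigraphs always exist (de Werra); the standard engine is an alternating-path exchange: starting from any $h$-colouring, whenever two colours $i,j$ are unbalanced at some vertex, one swaps colours along a maximal $i$--$j$ alternating walk to move a unit of excess, bipartiteness guaranteeing that the walk behaves as a path between deficient endpoints and no new imbalance is created. Equivalently one peels off one balanced colour class at a time as a degree-constrained subgraph, whose existence in a bipartite graph follows from an integral flow/Hall argument.

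It remains to verify the three required properties. Global balance of the colour classes is exactly the assertion that $y$ is balanced. For the pair $(x_1,y)$, the number of coordinates with $(x_1(j),y(j))=(a,c)$ equals the number of colour-$c$ edges at the left vertex $a$, which by vertex-balance lies in $\{\lfloor n_a/h\rfloor,\lceil n_a/h\rceil\}$. Here I would use the nesting identities $\lfloor\lfloor t\rfloor/h\rfloor=\lfloor t/h\rfloor$ and $\lceil\lceil t\rceil/h\rceil=\lceil t/h\rceil$ with $t=n/g_1$: since $\lfloor n/g_1\rfloor\le n_a\le\lceil n/g_1\rceil$, these give $\lfloor n/(g_1h)\rfloor\le\lfloor n_a/h\rfloor$ and $\lceil n_a/h\rceil\le\lceil n/(g_1h)\rceil$, so every such count is one of the two consecutive integers $\lfloor n/(g_1h)\rfloor,\lceil n/(g_1h)\rceil$. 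Hence $x_1$ and $y$ are pairwise balanced, and the identical argument at the right vertices shows $x_2$ and $y$ are pairwise balanced.

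The hard part will be establishing the balanced bipartite edge-colouring with the correct per-vertex floor/ceiling guarantee, together with global balance of the colour classes (the latter does not follow from per-vertex balance alone, since summing $g_1$ counts each within $\{\lfloor n/(g_1h)\rfloor,\lceil n/(g_1h)\rceil\}$ only pins the total to an interval of width $g_1$). The three-dimensional contingency-table view of the same construction is not totally unimodular, so a direct LP-rounding shortcut is unavailable, and it is exactly the bipartite structure that rescues the argument through the alternating-path exchange. The subsequent floor/ceiling bookkeeping is routine once the nesting identities are in hand.
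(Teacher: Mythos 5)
Your proposal is correct and follows essentially the same route as the paper: both encode the coordinates as the edges of a bipartite multigraph on $\mathbb Z_{g_1}\cup\mathbb Z_{g_2}$ and obtain $y$ from an equitable $h$-edge-colouring, with the per-vertex degree bounds plus the floor/ceiling nesting identities yielding pairwise balance. The only difference is that you invoke de Werra's balanced-colouring theorem for the key step, whereas the paper proves that step directly by splitting high-degree vertices, applying K\"onig's decomposition into $h$ matchings, and then equalizing the colour-class sizes by alternating-path exchanges --- which is the standard proof of the result you cite.
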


\begin{proof}
 Construct a bipartite multigraph $G$ corresponds to $x_1$ and $x_2$ as follow: $G$ has $g_1$ vertices in the first part 
 $P\subseteq V(G)$ and $g_2$ vertices in the second part $Q\subseteq V(G)$. 
 Let $P_a=\{i~|~x_1(i)=a\}$ for $a=0,1,\ldots, g_1-1$, be the vertices of $P$, 
 while $Q_b=\{i~|~x_2(i)=b\}$ for $b=0, 1, \ldots,g_2-1$, be the vertices of $Q$. We have that 
$\lfloor\frac{n}{g_1}\rfloor \leq |P_a|\leq  \lceil\frac{n}{g_1}\rceil$ and 
$\lfloor\frac{n}{g_2}\rfloor \leq |Q_b|\leq  \lceil\frac{n}{g_2}\rceil$, as $x_1$ and $x_2$ are balanced vectors.
For each $i=1,2,\ldots,n$ there exists exactly one $P_a\in P$ with $i\in P_a$ and exactly one $Q_b\in Q$ with $i\in Q_b$. 
For each such $i$, add an edge between vertices corresponding to $P_a$ and $Q_b$ and label it $i$. Hence $d_G(P_a)= |P_a|$ and 
$d_G(Q_b)= |Q_b|$. 
If any vertex $v$ of $G$ has $d_G(v) > h$ then we split it into $\lfloor \frac{d_G(v)}{h} \rfloor$ vertices
 of degree $h$ and, if necessary, one vertex of degree $d_G(v)- h\lfloor \frac{d_G(v)}{h} \rfloor $. Denote this resultant 
 bipartite multigraph by $H$ with maximum degree $\Delta (H)=h$.   
We know that a bipartite graph $H$ with maximum degree $h$ is the union of $h$ matching. Thus $E(H)$ is union of $h$ 
matchings $F_0$, $F_1$, \ldots, $F_{h-1}$. Now identify those points of $H$ which corresponds to the same point of $G$,
then $F_0$, $F_1$, \ldots, $F_{h-1}$ are mapped onto certain edge disjoint spanning  subgraphs 
$F^{\prime}_0$, $F^{\prime}_1$, \ldots, $F^{\prime}_{h-1}$ of $G$. 
These $h$ edge-disjoint spanning subgraphs $F^{\prime}_0$, $F^{\prime}_1$, \ldots, $F^{\prime}_{h-1}$ of $G$ 
form a partition of $E(G)=[1,n]$ which we use to build a balanced vector $y\in \mathbb Z_h^n$.
Each edge disjoint spanning subgraph corresponds to a symbol in $\mathbb Z_h$ and each edge corresponds to an index 
from $[1,n]$.
Suppose edge disjoint spanning subgraph $F^{\prime}_c$ corresponds to symbol
$c\in \mathbb Z_h$. For each edge $i$ in $F^{\prime}_c$, define $y(i)=c$.
 Since $F_i$ is a matching, there is atmost one $F_i$-edge incident with any of the $\lceil \frac{d_G(P_a)}{h}\rceil$ vertices 
 of $H$ corresponds to $P_a\in P$. Hence $$d_{F_i^{\prime}}(P_a)\leq \lceil \frac{d_G(P_a)}{h}\rceil.$$
On the other hand, there are $\lfloor \frac{d_G(P_a)}{h}\rfloor$ vertices of $H$ corresponds to $P_a$ which have degree $h$. 
There must be an $F_i$-edge starting from each of these, whence
$$d_{F_i^{\prime}}(P_a)\geq \lfloor \frac{d_G(P_a)}{h}\rfloor.$$ Thus  
we have  $ \lfloor\frac{n}{g_1h}\rfloor\leq d_{F_i^{\prime}}(P_a)\leq \lceil\frac{n}{g_1h}\rceil$ for $i=0,1,\ldots,h-1$. 
This means that there exist
$\lfloor\frac{n}{g_1h}\rfloor$ or $\lceil\frac{n}{g_1h}\rceil$ edges $i\in [1,n]$ such that
$x_1(i)=a$ and $y(i)=c$, or in other words, each pair of symbols 
$(a,c)\in \mathbb Z_{g_1}\times\mathbb Z_h$ between $x_1$ and $y$ appears either $ \lfloor\frac{n}{g_1h}\rfloor$ 
or $\lceil\frac{n}{g_1h}\rceil$ times.
So, $x_1$ and $y$ are pairwise balanced vectors. Similarly, we can show that $y$ and $x_2$ are pairwise balanced vectors. 
Next, we need to show that $y$ is balanced. 
This corresponds to each spanning subgraph $F^{\prime}_i$  
contains either $\lfloor\frac{n}{h}\rfloor$ or $\lceil\frac{n}{h}\rceil$ edges. In other words, this corresponds to 
each matching $F_i$ contains either $\lfloor\frac{n}{h}\rfloor$ or $\lceil\frac{n}{h}\rceil$ edges.   
Suppose  we have two matchings $F_0$ and $F_1$  that differ by size more than 1, say $F_0$ smaller and $F_1$ larger. 
Every component of the union of $F_0$ and $F_1$ could be an alternating even cycle or an alternating path. 
 Note that it must contain a path, otherwise their sizes are equal. We can find a path component in the union graph
that contains more edges from $F_1$ than $F_0$. Swap the $F_1$ edges with the $F_0$ edges in this path component. 
Then the resultant graph has $F_0$ 
increased in size by 1 edge, and $F_1$ decreased in size by 1 edge. Continue this process on 
$F_0$, $F_1$, \ldots, $F_{h-1}$  until the sizes are correct.    
\end{proof}

\noindent The following corollary is an easy consequence of Lemma \ref{lemma1}.

\begin{corollary}\label{cor1}
Let $ x \in \mathbb Z_{g}^{n}$ be  a balanced vector.
 Then for any positive integer $h$, there exists a balanced vector $y\in \mathbb Z_{h}^{n}$ 
such that $x$ and $y$ are pairwise balanced.
 \end{corollary}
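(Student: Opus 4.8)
The plan is to obtain the corollary directly from Lemma \ref{lemma1} by feeding it a degenerate instance, so that the two-vector statement collapses to the one-vector statement we want. The only construction I would perform is choosing how to instantiate the two balanced input vectors of Lemma \ref{lemma1}.

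First I would set $x_1 = x_2 = x$, so that $g_1 = g_2 = g$. By hypothesis $x$ is balanced, hence both $x_1$ and $x_2$ are balanced vectors in $\mathbb{Z}_g^n$ and the hypotheses of Lemma \ref{lemma1} are met. Applying Lemma \ref{lemma1} with this input and with the given positive integer $h$, I obtain a balanced vector $y \in \mathbb{Z}_h^n$ such that $x_1$ and $y$ are pairwise balanced and $x_2$ and $y$ are pairwise balanced. Since $x_1 = x$, the first of these two conclusions is exactly the assertion that $x$ and $y$ are pairwise balanced, while the second is redundant. This already yields a balanced $y$ with the required pairwise-balance property.

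An equally clean alternative I would mention is to take $x_2$ to be the trivial vector over an alphabet of size $1$ (all entries equal): such a vector is vacuously balanced, and it is automatically pairwise balanced with any balanced vector, so the second conclusion of Lemma \ref{lemma1} carries no content and only the desired statement about $x$ and $y$ survives. Either instantiation reduces the corollary to a single invocation of the lemma.

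I do not expect any genuine obstacle here, which is consistent with the phrasing "easy consequence." The only point that needs a word of justification is that the chosen instance legitimately satisfies the lemma's hypotheses, and this is immediate because balancedness of $x$ is assumed; the conclusion then follows by specialization with no further computation.
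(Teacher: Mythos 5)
Your proposal is correct and matches the paper's own proof, which likewise sets $x_1 = x_2 = x$ and invokes Lemma \ref{lemma1} directly. The alternative instantiation you mention is a harmless variation, but the main argument is essentially identical to the paper's.
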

\begin{proof}
This follows from Lemma \ref{lemma1}. Set $x_1=x$ and $x_2=x$. 
\end{proof}

\begin{lemma}\label{lemma2} Let $x_1\in \mathbb Z_{g_1}^n $ and $x_2\in \mathbb Z_{g_2}^n $ be two pairwise balanced vectors. 
Then for any $h$ such that $g_1g_2h\leq n$, there exists a balanced vector $y\in \mathbb Z_h^n $
such that $x_1$, $x_2$ and $y$ are 3-qualitatively independent and $x_1$ and $y$ are pairwise balanced and $x_2$ and $y$ are pairwise balanced.

\end{lemma}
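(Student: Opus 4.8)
The plan is to reduce the statement to Lemma \ref{lemma1} by first setting aside a small, perfectly uniform set of columns that single-handedly guarantees $3$-qualitative independence, and then balancing the rest. For $(a,b)\in\mathbb Z_{g_1}\times\mathbb Z_{g_2}$ write $C_{a,b}=\{\, j \mid (x_1(j),x_2(j))=(a,b)\,\}$ and $M(a,b)=|C_{a,b}|$. Since $x_1,x_2$ are pairwise balanced, $M(a,b)\in\{\lfloor n/g_1g_2\rfloor,\lceil n/g_1g_2\rceil\}$, and the hypothesis $g_1g_2h\le n$ forces $M(a,b)\ge\lfloor n/g_1g_2\rfloor\ge h$ for every class, since $h$ is an integer.

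First I would reserve the columns that enforce coverage. In each class $C_{a,b}$ choose $h$ distinct indices and assign to them the $h$ distinct symbols $0,1,\dots,h-1$ of $\mathbb Z_h$ bijectively; this is possible because $M(a,b)\ge h$. Call these the reserved columns. After this assignment every triple $(a,b,c)\in\mathbb Z_{g_1}\times\mathbb Z_{g_2}\times\mathbb Z_h$ already occurs as $(x_1(j),x_2(j),y(j))$ for one reserved index $j\in C_{a,b}$, so $x_1,x_2,y$ will be $3$-qualitatively independent no matter how the remaining columns are coloured. Moreover the reserved part is exactly uniform on every margin: for each fixed $(a,c)$ it contributes exactly $g_2$ columns with $(x_1,y)=(a,c)$, for each fixed $(b,c)$ exactly $g_1$ columns with $(x_2,y)=(b,c)$, and exactly $g_1g_2$ columns of colour $c$ for each $c$.

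Next I would balance the unreserved columns by invoking Lemma \ref{lemma1}. Let $x_1',x_2'$ be the restrictions of $x_1,x_2$ to the $n'=n-g_1g_2h$ unreserved indices. Because exactly $h$ indices are removed from each class, the number of occurrences of each symbol $a$ in $x_1'$ drops by the constant $g_2h$ from its value in $x_1$, so the within-one property is preserved and $x_1'$ is a balanced length-$n'$ vector; the same holds for $x_2'$. Applying Lemma \ref{lemma1} to $x_1',x_2'$ yields a balanced vector $y'\in\mathbb Z_h^{n'}$ with $x_1',y'$ pairwise balanced and $x_2',y'$ pairwise balanced. I then colour the unreserved columns by $y'$, completing the definition of $y$.

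It remains to verify that the full vector $y$ meets all four requirements, and this is where the only real bookkeeping lies. Three-qualitative independence is already secured by the reserved columns. For the balance conditions I would use the elementary fact that if $m$ integers each lie in $\{t,t+1\}$ and sum to $N$, then each equals $\lfloor N/m\rfloor$ or $\lceil N/m\rceil$. Concretely, the number of columns with $(x_1,y)=(a,c)$ equals $g_2$ (from the reserved part) plus the corresponding count in $(x_1',y')$, which lies in $\{\lfloor n'/g_1h\rfloor,\lceil n'/g_1h\rceil\}$; adding the constant $g_2$ keeps these $g_1h$ totals within one of each other, and since they sum to $n$ they must each equal $\lfloor n/g_1h\rfloor$ or $\lceil n/g_1h\rceil$, so $x_1,y$ are pairwise balanced. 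The same argument applied to the $(x_2,y)$-margin and to the colour margin of $y$ gives that $x_2,y$ are pairwise balanced and $y$ is balanced. The main obstacle is thus not conceptual but lies precisely in this last step, namely checking that combining the uniform reserved contribution with the balanced contribution of $y'$ lands every relevant count in its prescribed two-value window; the hypothesis $g_1g_2h\le n$ is used exactly once, to make the reservation step feasible.
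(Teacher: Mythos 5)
Your proof is correct, and it takes a genuinely different route from the paper's. The paper builds the bipartite multigraph $G$ with parts $P_a=\{i\mid x_1(i)=a\}$ and $Q_b=\{i\mid x_2(i)=b\}$, splits each vertex into copies of degree at most $h$ while forcing $h$ parallel edges between designated copies $P_{ab}$ and $Q_{ba}$ for every pair $(a,b)$, decomposes the resulting bounded-degree bipartite graph into $h$ matchings, and checks that each matching projects back to a spanning subgraph of $G$ meeting every $P_a$--$Q_b$ pair (giving 3-qualitative independence) with the right degree bounds (giving pairwise balance), finishing with an alternating-path swap to balance the matching sizes. You instead reserve, in each class $C_{a,b}$, a block of $h$ columns coloured bijectively by $\mathbb Z_h$ --- feasible precisely because $g_1g_2h\le n$ forces $\lfloor n/(g_1g_2)\rfloor\ge h$ --- which secures 3-qualitative independence outright and contributes exactly $g_2$, $g_1$, and $g_1g_2$ columns to each $(a,c)$-, $(b,c)$-, and $c$-margin respectively; you then apply Lemma \ref{lemma1} to the restrictions $x_1',x_2'$ (still balanced, since every symbol count drops by the same constant) and add the two contributions, using the identities $g_2+n'/(g_1h)=n/(g_1h)$, $g_1+n'/(g_2h)=n/(g_2h)$, $g_1g_2+n'/h=n/h$ to land every count in its prescribed window. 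Your argument is more elementary and isolates exactly where the hypothesis $g_1g_2h\le n$ is used, and it sidesteps the somewhat delicate step in the paper where leftover edges must be distributed ``arbitrarily'' while preserving the degree constraints. What the paper's machinery buys is reusability: the same split-and-decompose template is extended in Section 5 to the tripartite 3-uniform hypergraph setting of Lemma \ref{lemma3}, where a reservation argument of your type would not transfer as directly because the third coordinate is already fixed.
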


\begin{proof}
Construct a bipartite multigraph $G$ corresponds to $x_1$ and $x_2$ as defined in the proof of Lemma \ref{lemma1}. We have that the vectors $x_1$ and $x_2$ are pairwise balanced, that is, for each pair $(a,b)\in\mathbb Z_{g_1}\times \mathbb Z_{g_2}$, the 
number of edges between $P_a$ and $Q_b$ is $\lfloor \frac{n}{g_1g_2}\rfloor$ or $\lceil \frac{n}{g_1g_2}\rceil$. The problem is to find a balanced vector 
$y\in \mathbb Z_h^n$, such that $x_1$, $x_2$ and $y$ are 3-qualitatively independent, $x_1$ and $y$ are pairwise balanced, and $x_2$ and $y$ are
pairwise balanced. Assume without loss of generality that $g_1\leq g_2$. We construct a bipartite multigraph $H$ from $G$ as follow:  
We split each point $P_a \in P$ in $G$  into $\lfloor \frac{d_G(P_a)}{h} \rfloor$ points of degree $h$ and, if necessary, one point of degree $d_G(P_a)- h\lfloor \frac{d_G(P_a)}{h} \rfloor $
 in $H$. Thus, using balancedness of $x_1$, we have that there are at least $g_2 $ copies of $P_a$ in $H$ from the split operation. 
Label them $P_{a0}$, $P_{a1}$, ..., $P_{a,g_2-1}$, 
$P_{a g_2} \ldots$ ($g_2$ onwards are extra). Similarly we split each point $Q_b\in Q$ into $\lfloor \frac{d_G(Q_b)}{h} \rfloor$ points of degree $h$ and, if necessary, 
one point of degree $d_G(Q_b)- h\lfloor \frac{d_G(Q_b)}{h} \rfloor $  in $H$. 
Thus, using balancedness of $x_2$, we have that there are at least $ g_1 $ copies of $Q_b$ in $H$ from the split operation. 
Label them $Q_{b0}$, $Q_{b1}$, ..., $Q_{b,g_1-1}$,
$Q_{bg_1} \ldots$ ($g_1$ onwards are extra). 
For each pair of vertices $P_a$ and $Q_b$, we have at least $h$ edges between $P_a$ and $Q_b$; consider only the first $h$ edges from $P_a$ to $Q_b$ (ignore the rest for now).
These $h$ edges between $P_a$ and $Q_b$ in $G$ become the $h$ edges between $P_{ab}$ and $Q_{ba}$ in $H$.
This results in a graph (possibly multigraph) where every vertex has maximum degree $h$.  We add remaining edges arbitrarily to $H$ amongst  the  remaining vertices 
 (including the extra vertices) in any way, provided we maintain $H$ as bipartite graph with maximum degree $h$ and every vertex
 $v$ of $G$ is split  into $\lfloor \frac{d_G(v)}{h} \rfloor$ points of degree $h$ and, if necessary, one point of degree $d_G(v)- h\lfloor \frac{d_G(v)}{h} \rfloor $.
We know that a bipartite graph with maximum degree $h$ is the union of $h$ matching. Thus $E(H)$ is union of $h$ 
matchings $F_0$, $F_1$, \ldots, $F_{h-1}$. Now identify those points of $H$ which corresponds to the same point of $G$, then $F_0$, $F_1$, \ldots, $F_{h-1}$
are mapped onto certain edge disjoint spanning  subgraphs $F^{\prime}_0$, $F^{\prime}_1$, \ldots, $F^{\prime}_{h-1}$ of $G$. 
We claim each of the spanning subgraphs $F^{\prime}_i$ is a complete bipartite multigraph. 
 For every $a\in \mathbb Z_{g_1}$, $b\in\mathbb Z_{g_2} $, there are $h$ edges from $P_{ab}$ to $Q_{ba}$ in $H$, 
and they will all appear in different matchings $F_0$, $F_1$, \ldots, $F_{h-1}$.
This ensures that the spanning subgraphs contain at least one $P_a-Q_b$ edge for every $a\in \mathbb Z_{g_1}$, $b\in\mathbb Z_{g_2} $.
This proves that each of the spanning subgraphs $F^{\prime}_i$ is a complete bipartite multigraph.  
 These $h$ edge-disjoint spanning subgraphs $F^{\prime}_0$, $F^{\prime}_1$, \ldots, $F^{\prime}_{h-1}$ of $G$ form a partition of $E(G)=[1,n]$ which we use to build a balanced vector $y\in \mathbb Z_h^n$.
Each edge disjoint spanning subgraph corresponds to a symbol in $\mathbb Z_h$ and each edge corresponds to an index from $[1,n]$.
Suppose edge disjoint spanning subgraph $F^{\prime}_c$ corresponds to symbol
$c\in \mathbb Z_h$. For each edge $i$ in $F^{\prime}_c$, define $y(i)=c$. We need to show that $x_1$, $x_2$, $y$ are 3-qualitatively independent. 
For any  $a\in \mathbb Z_{g_1}$, $b\in \mathbb Z_{g_2}$, $c\in \mathbb Z_h$, in the spanning subgraph $F^{\prime}_c$ there is an edge $i$ incident to
$P_a\in P$ and $Q_b\in Q$ as $F^{\prime}_c$ is a complete bipartite multigraph. This means that for any $a\in \mathbb Z_{g_1}$,
$b\in \mathbb Z_{g_2}$, $c\in \mathbb Z_h$, there exists an $i\in [1,n]$ such that $x_1(i)=a$, $x_2(i)=b$, and $y(i)=c$. So, $x_1$, $x_2$ and $y$ are 3-qualitatively independent.
Next, we prove that $x_1$ and $y$ are pairwise balanced, and $x_2$ and $y$ are
pairwise balanced. Since $F_c$ is a matching, there is atmost one $F_c$-edge incident with any of the $\lceil \frac{d_G(P_a)}{h}\rceil$ vertices of $H$ corresponds to $P_a\in P$. Hence
$$d_{F_c^{\prime}}(P_a)\leq \lceil \frac{d_G(P_a)}{h}\rceil.$$
On the other hand, there are $\lfloor \frac{d_G(P_a)}{h}\rfloor$ points of $H$ corresponds to $P_a$ which have degree $h$. There must be an $F_c$-edge starting from each of these, whence
$$d_{F_c^{\prime}}(P_a)\geq \lfloor \frac{d_G(P_a)}{h}\rfloor.$$ Thus  
we have  $ \lfloor\frac{n}{g_1h}\rfloor\leq d_{{F_c}^{\prime}}(P_a)\leq \lceil\frac{n}{g_1h}\rceil$ for $c=0,1,\ldots,h-1$. This means that there exist
$\lfloor\frac{n}{g_1h}\rfloor$ or $\lceil\frac{n}{g_1h}\rceil$ edges $i\in [1,n]$ such that $x_1(i)=a$ and $y(i)=c$, or in other words, each pair of symbols $(a,c)\in
\mathbb Z_{g_1}\times\mathbb Z_h$ between $x_1$ and $y$ appears either $ \lfloor\frac{n}{g_1h}\rfloor$ or $\lceil\frac{n}{g_1h}\rceil$ times.
So, $x_1$ and $y$ are pairwise balanced vectors. Similarly, we can show that $y$ and $x_2$ are pairwise balanced vectors. Next, we need to show that $y$ is balanced. 
This corresponds to each spanning subgraph $F_c^{\prime}$  contains either $\lfloor\frac{n}{h}\rfloor$ or $\lceil\frac{n}{h}\rceil$ edges. In other words, this corresponds to 
each matching $F_c$ contains either $\lfloor\frac{n}{h}\rfloor$ or $\lceil\frac{n}{h}\rceil$ edges.   The proof of balancedness is  the same as that of Lemma \ref{lemma1}.

\end{proof}

\section{Optimal Mixed Covering Array on 3-Uniform Hypergraph}
\noindent 
Let $ H $ be a vertex-weighted 3-uniform hypergraph with $ k $ vertices. Label the vertices $ v_{1}, v_{2},..., v_{k}$ and for
each vertex $ v_{i} $ denote its associated weight by $ w_{H}(v_{i}) $. Let the $ \textit{product weight} $ of $ H $, 
denoted $ PW(H) $, be
$$PW(H) = \mbox{max}  \lbrace w_H(u)w_H(v)w_{H}(w) : \{u, v, w\} \in E(H) \rbrace.$$
 Note that 3-$CAN( H, \prod_{i=1}^{k} w_{H}(v_{i}) ) \geq  PW(H)$. A balanced covering array on  $H$ is 
 a covering array on $H$ in which every row is balanced and the rows correspond to vertices in a hyperedge are pairwise balanced. 
 
\subsection{Basic Hypergraph Operations}
 We now introduce four hypergraph operations:
\begin{enumerate}
\item \textit{Single-vertex edge hooking I}
\item \textit{Single-vertex edge hooking II}
\item \textit{Two-vertex hyperedge hooking}
\item \textit{Single-vertex hyperedge hooking I}
\end{enumerate}

 \begin{figure}[h]
 \centering
 \includegraphics[width=12cm]{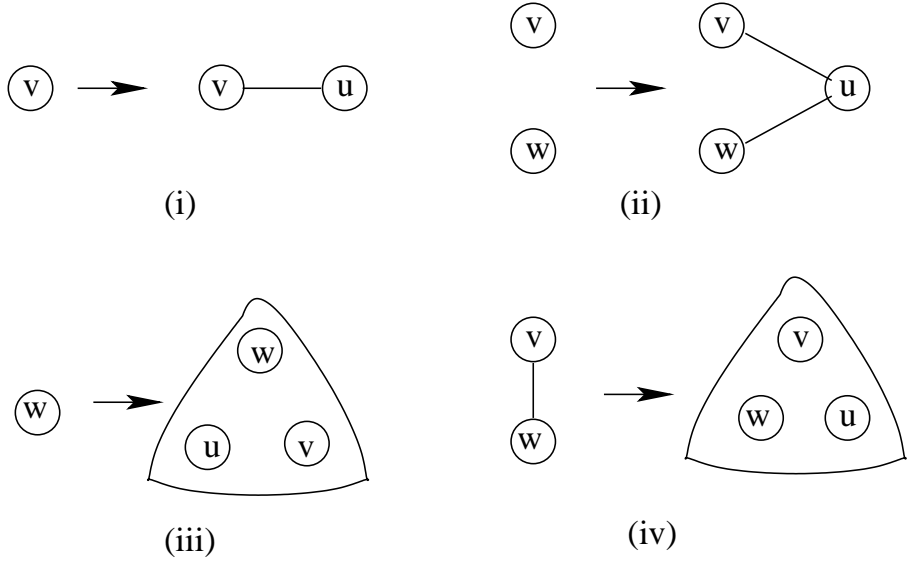}
 % cyc1.pdf: 244x227 pixel, 72dpi, 8.61x8.01 cm, bb=0 0 244 227
 \caption{(i) Single-vertex edge hooking I (ii) Single-vertex edge hooking II  (iii) Two-vertex hyperedge hooking (iv) Single-vertex hyperedge hooking I}
 \label{fig:3}
\end{figure}

 A {\it single-vertex edge hooking I} in hypergraph $H$ is the operation that  inserts a new edge $\{u,v\}$ in which $ u$ is a new vertex and $v$ is in $V(H)$.  
A {\it single-vertex edge hooking II} in hypergraph $H$ is the operation that  inserts two new edges $\{u,v\}$  and $\{u,w\}$ in which $ u$ is a new vertex and 
$v$ and $w$ are in $V(H)$.  
A {\it two-vertex hyperedge hooking} in a hypergraph $H$ is the operation that insert a new hyperedge $\{u,v,w\}$ in which $u$ and $v$ are new vertices and $w$ is 
in $V(H)$. 
  A  {\it single vertex hyperedge hooking I} in a  hypergraph $H$ 
 is the operation that  replaces an edge $\{v,w\}$  by a  hyperedge $\{u,v,w\}$ where $u$ is a new vertex.

\begin{proposition} \label{prop1} Let $H$ be a weighted hypergraph with $k$ vertices and $H'$ be the weighted 
hypergraph obtained from $H$ by single-vertex edge hooking I, single-vertex edge hooking II or single vertex hyperedge hooking I operation  with $u$ as a new vertex with $w(u) $ such that $PW(H)=PW(H^{\prime})$. 
Then, there exists a balanced $CA( n, H, \prod_{i=1}^{k} g_{i})$ 
 if and only if there exists a balanced $CA(n, H^{'}, w(u)\prod_{i=1}^{k} g_{i})$.
\end{proposition}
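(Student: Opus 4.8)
The plan is to prove both implications separately, and in each direction to split into the three operations listed. I would handle the ``if'' part (passing from $H'$ to $H$, i.e.\ deleting the row of $u$) first because it is the easier direction, then the ``only if'' part (passing from $H$ to $H'$, i.e.\ constructing a row for $u$) using the three results of Section~3.

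For the deletion direction, suppose a balanced $CA(n,H',w(u)\prod_{i=1}^k g_i)$ is given, and delete the row corresponding to the new vertex $u$. Since balancedness of a row and pairwise balancedness of a pair of rows are properties of those rows alone, they are unaffected by removing the $u$-row; so it suffices to recheck qualitative independence on every (hyper)edge of $H$. For single-vertex edge hooking I and II the edge set of $H$ is contained in that of $H'$, so every edge of $H$ is already covered. For single-vertex hyperedge hooking I the only edge of $H$ absent from $H'$ is $\{v,w\}$, which was replaced by $\{u,v,w\}$; but $3$-qualitative independence of rows $u,v,w$ restricts to $2$-qualitative independence of $v,w$ (for every pair $(b,c)$ pick a column realizing $(a,b,c)$ for some $a$), and the pairwise balancedness of $\{v,w\}$ is likewise inherited. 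Hence the deleted array is a balanced $CA(n,H,\prod_{i=1}^k g_i)$.

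For the construction direction, suppose a balanced $CA(n,H,\prod_{i=1}^k g_i)$ is given; I would build a single new row $y\in\mathbb Z_{w(u)}^n$ and append it. Because $u$ lies only in the newly created (hyper)edge(s), every edge of $H'$ avoiding $u$ is already satisfied, so the whole task reduces to choosing $y$ correctly on the new (hyper)edge(s). For single-vertex edge hooking I, row $v$ is balanced, so Corollary~\ref{cor1} with $h=w(u)$ yields a balanced $y$ pairwise balanced with $v$. For single-vertex edge hooking II, rows $v$ and $w$ are balanced, so Lemma~\ref{lemma1} yields a balanced $y$ pairwise balanced with both. For single-vertex hyperedge hooking I, rows $v,w$ are pairwise balanced (they form an edge of $H$), so Lemma~\ref{lemma2} yields a balanced $y$ with $v,w,y$ $3$-qualitatively independent and $y$ pairwise balanced with each of $v,w$. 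Setting the $u$-row equal to $y$ then gives a balanced $CA(n,H',w(u)\prod_{i=1}^k g_i)$.

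The step I expect to require the most care is the size hypothesis needed to invoke Lemma~\ref{lemma2} and, more generally, to guarantee genuine qualitative (not merely pairwise) balance on the new edges. For hyperedge hooking I the required inequality is $w(u)\,g_v\,g_w\le n$: since any covering array on $H$ has size at least $PW(H)$, the given array forces $n\ge PW(H)$, and $PW(H)=PW(H')$ together with $\{u,v,w\}\in E(H')$ gives $w(u)g_vg_w\le PW(H')=PW(H)\le n$, which is exactly the condition $g_1g_2h\le n$ of Lemma~\ref{lemma2}. For the two edge-hooking operations the analogous requirement is $w(u)g_v\le n$ (and $w(u)g_w\le n$), ensuring the pairwise balanced $y$ produced by Corollary~\ref{cor1} and Lemma~\ref{lemma1} is $2$-qualitatively independent with $v$ (resp.\ $w$); I would derive these from the same $PW$-based bound on $n$. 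Finally, I would note that appending $y$ leaves all previously verified edges untouched, so no interaction between the old constraints and the new one must be re-examined, which closes the argument.
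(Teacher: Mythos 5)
Your proposal is correct and follows essentially the same route as the paper: delete the $u$-row for one direction, and for the other append a row built via Corollary~\ref{cor1}, Lemma~\ref{lemma1}, or Lemma~\ref{lemma2} according to which of the three operations was performed. You are in fact slightly more careful than the paper on two points it leaves implicit --- that for hyperedge hooking I the deleted edge $\{v,w\}$ of $H$ inherits its requirements from the $3$-qualitative independence of $\{u,v,w\}$, and that the size conditions such as $w(u)g_vg_w\le n$ follow from $n\ge PW(H)=PW(H')$.
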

\begin{proof}
  If there exists a balanced CA($ n, H', w(u)\prod_{i=1}^{k} g_{i} $) then by deleting the row 
 corresponding to the new vertex $u$ we can obtain a $CA(n, H, \prod_{i=1}^{k} g_{i})$. 
 Conversely, let $ C^{H} $ be a balanced CA($ n, H, \prod_{i=1}^{k} g_{i} $). The balanced covering array  $ C^{H} $ can be used to construct  
 $ C^{H^{\prime}}$, a balanced CA($ n, H^{'}, w(u)\prod_{i=1}^{k} g_{i} $). We consider the following cases:\\
 \noindent Case 1:  Let $H^{\prime}$ be obtained from $H$ by a single vertex edge hooking I 
 of a new vertex $u$ with a new edge $\{u,v\}$, and $w(u)$ such that $w(u)w(v)\leq n$.  Using Corollary \ref{cor1}, we can build a balanced length-$n$ vector $y$ corresponds to vertex $u$ such that  $y$
 is pairwise balanced with the length-$n$ vector $x$ corresponds to vertex $v$.  
 The array $C^{H^{\prime}}$ is built by appending row $y$ to $C^H$. \\
 Case 2: Let $H^{\prime}$ be obtained from $H$ by a single vertex edge hooking II 
 of a new vertex $u$ with two new edges $\{u,v\}$ and $\{u,w\}$, and $w(u)$ such that $w(u)w(v)\leq n$ and $w(u)w(w)\leq n$.  Using Lemma \ref{lemma1}, we can build a balanced length-$n$ vector $y$ corresponds to vertex $u$ such that  $y$
 is pairwise balanced with the length-$n$ vectors $x_1$ and $x_2$ correspond to vertices $u$ and $v$ respectively.  
 The array $C^{H^{\prime}}$ is built by appending row $y$ to $C^H$. \\
 Case 3: If $H^{\prime}$ is obtained from $H$ by  replacing an edge $\{v,w\}\in E(H)$ by a new hyperedge $\{u,v,w\}$ in which $u$ is a new vertex, 
 and $w(u)$ such that $w(u)w(v)w(w)\leq n$. 
 Using Lemma \ref{lemma2}, we can build a balanced length $n$ vector $y$ corresponds to vertex $u$ such that $y$ is 3-qualitatively independent with two length-$n$
 pairwise balanced vectors $x_1$ and $x_2$ correspond to vertices $v$ and $w$ in $H$. The array $C^{H^{\prime}}$ is built by appending row $y$ to $C^H$.
  \end{proof}
  
  \begin{proposition} \label{new} Let $H$ be a weighted hypergraph with $k$ vertices and $H'$ be the weighted 
hypergraph obtained from $H$ by two-vertex hyperedge hooking operation  with $u$ and $v$ as  new vertices  with $w(u) $ and $w(v)$  such that $PW(H)=PW(H^{\prime})$. 
Then, there exists a balanced $CA( n, H, \prod_{i=1}^{k} g_{i})$ 
 if and only if there exists a balanced $CA(n, H^{'}, w(u)w(v)\prod_{i=1}^{k} g_{i})$.
\end{proposition}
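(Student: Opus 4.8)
The plan is to follow the same template established in Proposition~\ref{prop1}: one direction is trivial (row deletion), and the other direction builds the larger array by appending rows to a given balanced covering array on $H$. The key difference here is that the two-vertex hyperedge hooking introduces \emph{two} new vertices $u$ and $v$ simultaneously, together with a single new hyperedge $\{u,v,w\}$ where $w\in V(H)$ is an existing vertex. So I must append two new rows rather than one, and I need the three rows corresponding to $\{u,v,w\}$ to be $3$-qualitatively independent, with the relevant pairs pairwise balanced.

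First I would dispose of the easy direction: given a balanced $CA(n,H',w(u)w(v)\prod_{i=1}^k g_i)$, delete the two rows corresponding to $u$ and $v$. Since $u,v$ appear only in the new hyperedge, every hyperedge of $H$ survives intact among the remaining rows, and balancedness of each retained row and each retained pair is inherited, yielding a balanced $CA(n,H,\prod_{i=1}^k g_i)$.

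For the forward direction, let $C^H$ be a balanced $CA(n,H,\prod_{i=1}^k g_i)$, and let $x$ be the balanced length-$n$ row corresponding to the existing vertex $w$. The condition $PW(H)=PW(H')$ forces $w(u)w(v)w(w)\le n$; I would also use that $w(u)w(w)\le n$ and $w(v)w(w)\le n$ hold, so the hypotheses of the earlier lemmas are met. The construction is two-stage. In the first stage, apply Corollary~\ref{cor1} to the balanced vector $x$ to produce a balanced vector $y_u\in\mathbb Z_{w(u)}^n$ that is pairwise balanced with $x$; append $y_u$ as the row for $u$. Now $x$ (for $w$) and $y_u$ (for $u$) are two balanced vectors that are pairwise balanced, i.e.\ they satisfy the hypothesis of Lemma~\ref{lemma2}. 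In the second stage, apply Lemma~\ref{lemma2} with $x_1=y_u$, $x_2=x$, and $h=w(v)$ (using $w(u)w(w)w(v)\le n$) to obtain a balanced vector $y_v\in\mathbb Z_{w(v)}^n$ such that $y_u,x,y_v$ are $3$-qualitatively independent, and $y_v$ is pairwise balanced with each of $y_u$ and $x$. Append $y_v$ as the row for $v$. The resulting array $C^{H'}$ has all the rows of $C^H$ (still covering all hyperedges of $H$) plus the two new balanced rows, and the new hyperedge $\{u,v,w\}$ is $3$-qualitatively independent with all three pairs pairwise balanced; hence $C^{H'}$ is a balanced $CA(n,H',w(u)w(v)\prod_{i=1}^k g_i)$.

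The main subtlety to check carefully is the ordering of the two applications and the pairwise-balanced precondition of Lemma~\ref{lemma2}: that lemma requires its two input vectors $x_1,x_2$ to already be pairwise balanced, which is exactly why I first manufacture $y_u$ to be pairwise balanced with $x$ via Corollary~\ref{cor1} before invoking Lemma~\ref{lemma2}. I should also confirm that appending $y_v$ does not disturb the pairwise balance of $x$ and $y_u$ (it does not, since those rows are fixed) and that all three required alphabet-size inequalities follow from $PW(H)=PW(H')$ together with $w(u),w(v),w(w)$ each being at most the largest weights appearing in the relevant products. I expect no genuine obstacle beyond bookkeeping, since both the existence lemmas needed are already available.
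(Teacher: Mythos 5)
Your proposal is correct and follows essentially the same route as the paper: the reverse direction by deleting the two new rows, and the forward direction by first using Corollary~\ref{cor1} to append a balanced row for $u$ pairwise balanced with the row for $w$, then invoking Lemma~\ref{lemma2} on that pairwise balanced pair to append the row for $v$ so that the three rows of the new hyperedge are $3$-qualitatively independent. The only difference is notational, and your extra remarks on the ordering of the two steps and the precondition of Lemma~\ref{lemma2} match the paper's implicit reasoning.
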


\begin{proof}
If there exists a balanced CA($ n, H', w(u)\prod_{i=1}^{k} g_{i} $) then by deleting the rows 
 corresponding to the new vertices $u$ and $v$  we can obtain a $CA(n, H, \prod_{i=1}^{k} g_{i})$. 
 Conversely, let $ C^{H} $ be a balanced CA($ n, H, \prod_{i=1}^{k} g_{i} $). 
  Hypergraph $H^{\prime}$ is obtained from $H$ by a two-vertex hyperedge hooking  
 of two new vertices $u$ and $v$  with  a new hyperedge  $\{u,v,w\}$, and $w(u)$, $w(v)$ such that $w(u)w(v)w(w)\leq n$. Using Corollary \ref{cor1}, we can build a balanced length-$n$ vector $y_1$ corresponds to vertex $u$ such that  $y_1$
 is pairwise balanced with the length-$n$ vector $x$ corresponds to vertex $w$.  Then using Lemma \ref{lemma2}, we can build a balanced length $n$ vector $y_2$ corresponds to vertex $v$ such that $y_2$ is 3-qualitatively independent with two length-$n$
 pairwise balanced vectors $x$ and $y_1$ correspond to vertices $w$ and $u$ respectively in $H$. The array $C^{H^{\prime}}$ is built by appending rows 
 $y_1$ and $y_2$ to $C^H$.\\ 
 \end{proof}
 
 \begin{theorem}\label{thm1} Let $ H $ be a weighted  hypergraph and $ H^{'}$ be a weighted 3-uniform hypergraph obtained 
from $ H $ via a sequence of single-vertex edge hooking I, single-vertex edge hooking II,  two-vertex hyperedge hooking, single-vertex hyperedge hooking I operations.
 Let $ v_{k+1}, v_{k+2},..., v_{l} $ be the vertices in
$ V(H^{'})\setminus V(H)$ with weights $g_{k+1}, g_{k+2},..., g_{l}$ respectively so that $PW(H)=PW(H^{\prime})$. If there exists a balanced covering array 
 $CA(n,H,\prod_{i=1}^{k}g_{i})$, then there exists a balanced $CA( n, H^{'}, \prod_{i=1}^{l} g_{i} )$.
\end{theorem}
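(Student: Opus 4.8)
The plan is to prove Theorem~\ref{thm1} by induction on the number $m$ of elementary operations used to pass from $H$ to $H'$. Write the sequence as $H = H_0 \to H_1 \to \cdots \to H_m = H'$, where each arrow is one of the four hooking operations and introduces one or two new vertices. The aim is to show that, starting from the assumed balanced $CA(n, H, \prod_{i=1}^k g_i)$, a balanced covering array exists on each intermediate hypergraph $H_j$, with the weight of each newly introduced vertex appended to the symbol product, so that after the last step we obtain a balanced $CA(n, H', \prod_{i=1}^l g_i)$.

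First I would record the monotonicity fact that makes the hypotheses of Propositions~\ref{prop1} and~\ref{new} available at every step. Each of the four operations either adds only $2$-element edges (the two single-vertex edge hookings), or adds a new $3$-hyperedge (two-vertex hyperedge hooking), or replaces a $2$-edge by a $3$-hyperedge (single-vertex hyperedge hooking~I); in every case no existing $3$-hyperedge is destroyed, and a $2$-edge never contributes to the product weight. Hence $PW(H_0) \le PW(H_1) \le \cdots \le PW(H_m)$, and since we are given $PW(H_0) = PW(H_m)$, all of these quantities coincide, so $PW(H_j) = PW(H)$ for every $j$. In particular $PW(H_{j-1}) = PW(H_j)$, which is precisely the hypothesis required to invoke Proposition~\ref{prop1} or Proposition~\ref{new} on the single step $H_{j-1} \to H_j$.

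With this in hand the induction is routine. The base case $m=0$ is immediate, since then $H' = H$. For the inductive step, the shorter sequence $H_0 \to \cdots \to H_{m-1}$ yields, by the induction hypothesis (valid because $PW(H) = PW(H_{m-1})$), a balanced covering array on $H_{m-1}$ whose symbol product ranges over the vertices of $H_{m-1}$. I then apply Proposition~\ref{prop1} (in the direction that builds a balanced covering array on the larger hypergraph from one on the smaller) when the last operation is a single-vertex edge hooking~I, a single-vertex edge hooking~II, or a single-vertex hyperedge hooking~I, and I apply Proposition~\ref{new} likewise when it is a two-vertex hyperedge hooking. Each application appends the row or rows for the new vertex or vertices and multiplies the symbol product by the corresponding weight, producing a balanced covering array on $H_m = H'$ with product $\prod_{i=1}^l g_i$.

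The one point that needs care, and which I regard as the main obstacle, is checking the internal size conditions demanded inside Propositions~\ref{prop1} and~\ref{new}, namely inequalities of the form $w(u)w(v) \le n$ or $w(u)w(v)w(w) \le n$ for the newly created edge or hyperedge. These all follow from $n \ge PW(H')$, which holds because a balanced $CA(n,H,\prod_{i=1}^k g_i)$ already forces $n \ge PW(H) = PW(H')$. For a $3$-hyperedge operation the bound is immediate, since the created hyperedge $\{u,v,w\}$ belongs to $H'$ and hence $w(u)w(v)w(w) \le PW(H') \le n$; for a $2$-edge operation one uses that $H'$ is $3$-uniform, so the edge $\{u,v\}$ is eventually absorbed into a $3$-hyperedge of $H'$, giving $w(u)w(v) \le PW(H') \le n$. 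Finally, the fact that it is the \emph{balanced} covering array property, and not merely the covering property, that is preserved at each step is what guarantees the pairwise balancedness of the rows feeding into the next operation, so no further compatibility check between consecutive steps is required.
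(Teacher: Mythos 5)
Your proposal is correct and follows essentially the same route as the paper, whose proof is the single sentence that the result ``is derived by iterating the different cases of Proposition~\ref{prop1} and Proposition~\ref{new}.'' You simply make that iteration explicit as an induction, and your added checks --- that $PW$ is non-decreasing along the sequence (so $PW(H_j)=PW(H)$ for all $j$) and that the size conditions $w(u)w(v)\le n$ and $w(u)w(v)w(w)\le n$ follow from $n\ge PW(H')$ together with the eventual absorption of every $2$-edge into a $3$-hyperedge of the $3$-uniform $H'$ --- are exactly the details the paper leaves implicit.
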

\begin{proof} The result is derived by iterating the different cases of Proposition \ref{prop1} and Proposition \ref{new}. \end{proof}

\subsection{$\alpha$-acyclic 3-uniform hypergraphs}
The notion of hypergraph acyclicity plays crucial role in numerous fields of application of hypergraph theory specially in 
 relational database theory and constraint programming. There are many generalizations of the notion  of graph  acyclicity in 
 hypergraphs.  Graham \cite{G},
  and independently, Yu and Ozsoyoglu \cite{Y}, defined $\alpha$-acyclic  property for hypergraphs via a transformation now known as 
  the {\it GYO reduction}.  Given a hypergraph $H=(V,E)$, the GYO reduction applies the following operations repeatedly to $H$ until none can be applied: 
  \begin{enumerate}
  \item If a vertex $v\in V$ has degree one, then delete $v$ from the edge containing it.
  \item If $A,B\in E(H)$ are distinct hyperedges such that $A\subseteq B$, then delete $A$ from $E(H)$.
  \item If $A\in E$ is empty, that is $A=\phi$, then delete $A$ from $E$. 
  
  \end{enumerate}
  
\begin{definition}
A hypergraph $H$ is $\alpha$-acyclic if GYO reduction on $H$ results in an empty hypergraph. 

\end{definition}

 \begin{example}
 Hypergraph $H_1= (V,E)$ with $V=\{1,2,3,4,5,6\}$ and $E=\{\{1,2,3\}, \{1,3,4\},\{1,2,6\},\{2,3,5\}\}$ is $\alpha$-acyclic.  
 \end{example}
 
 \begin{example}
 Hypergraph $H_2= (V,E)$ with $V=\{1,2,3,4,5,6\}$ and $E=\{\{1,2,3\}, \{1,3,4\}, \{2,4,5\}, \{4,5,6\}\}$ is not $\alpha$-acyclic.  
 \end{example}

\begin{theorem}\label{thm3}
 Let $ H $ be a weighted $\alpha$-acyclic 3-uniform hypergraph with $ l $ vertices. Then there exists a balanced mixed
 3-$CA( n, H, \prod_{i=1}^{l} g_{i} )$ with $n = PW(H)$. \end{theorem}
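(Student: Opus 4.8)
The plan is to realize $H$ as the end product of a sequence of the hypergraph operations from Section~4 applied to a small seed, and then to invoke Theorem~\ref{thm1}. Since the lower bound $3\text{-}CAN(H,\prod_{i=1}^{l} g_i)\ge PW(H)$ already holds, exhibiting a balanced covering array of size exactly $n=PW(H)$ proves existence and optimality at once. The role of $\alpha$-acyclicity is to supply the order in which hyperedges may be attached: because the GYO reduction empties $H$, reading that reduction backwards builds $H$ up one hyperedge at a time, and the reverse of rule~(1) (adding a degree-one vertex to an edge) combined with the reverse of rule~(2) (adding a subedge of an existing hyperedge) is precisely what the operations \emph{two-vertex hyperedge hooking} and \emph{single-vertex hyperedge hooking I} accomplish.

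Concretely, I would first record a base case: a single weighted hyperedge $\{a,b,c\}$ with $g_ag_bg_c\le n$ admits a balanced covering array of size $n$. Take any balanced vector for $a$, apply Corollary~\ref{cor1} to obtain a row for $b$ pairwise balanced with it, then apply Lemma~\ref{lemma2} (whose hypothesis $g_ag_bg_c\le n$ holds) to obtain a row for $c$ that is $3$-qualitatively independent with $a,b$. I would start from the hyperedge of maximum weight, where $g_ag_bg_c=PW(H)$, so the seed array has size exactly $n=PW(H)$. Each subsequent hyperedge $E$ is then attached according to how many of its vertices already appear in the part built so far. If $E$ meets the built part in a single vertex, its other two vertices are new and $E$ is added by a two-vertex hyperedge hooking (Proposition~\ref{new}); if $E$ meets it in two vertices $v,w$, these lie in a common earlier hyperedge and are therefore pairwise balanced in the current balanced array, so the one new vertex is added by a single-vertex hyperedge hooking~I (Proposition~\ref{prop1}, Case~3, via Lemma~\ref{lemma2}). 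In either case the product weight of $E$ is at most $PW(H)$, so the invariant $PW=PW(H)$ is preserved and Theorem~\ref{thm1} keeps the array size equal to $n$.

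The step I expect to be the crux is justifying the attachment order, and in particular that each newly attached hyperedge meets the previously built hypergraph in a vertex set contained in a \emph{single} earlier hyperedge (never spread across two unrelated ones). This is exactly the running-intersection / join-tree property equivalent to $\alpha$-acyclicity: rooting a join tree at the maximum-weight hyperedge and visiting parents before children, connectivity of the family of hyperedges through any fixed vertex forces $E\cap(\bigcup_{\text{earlier}}E')\subseteq E_{\text{parent}}$, so the shared set has size $0$, $1$, or $2$ and, when it has size $2$, is a pairwise-balanced pair exactly as required. I would derive this ordering from the reverse GYO reduction, taking care that the intermediate (non-$3$-uniform) stages produced by the subedge step are legitimate inputs to the operations, i.e. that the only property actually used of the shared pair is its pairwise balance.

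Finally, a disconnected $H$ needs a separate remark, since the operations can only attach a hyperedge that shares a vertex with the built part, and hence cannot reach a component avoiding the maximum hyperedge. I would treat each connected component independently, building a size-$n$ balanced covering array on it from its own maximum hyperedge as seed (here the base case above is used with $n=PW(H)\ge PW(\text{component})$), and then stack these row blocks into one array with $n$ columns. Because distinct components share no hyperedge, no cross-component constraints arise, so the result is a balanced $3\text{-}CA(n,H,\prod_{i=1}^{l} g_i)$ with $n=PW(H)$, as desired.
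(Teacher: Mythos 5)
Your proposal follows essentially the same route as the paper's proof: seed with a maximum-weight hyperedge, rebuild $H$ in the reverse of a GYO deletion order, and attach each hyperedge by a two-vertex or single-vertex hyperedge hooking, invoking Theorem~\ref{thm1}. You are in fact more careful than the paper, whose proof silently assumes that each newly attached hyperedge meets the already-built part in one or two vertices contained in a single earlier hyperedge (your running-intersection observation, needed so that the shared pair is pairwise balanced and Lemma~\ref{lemma2} applies) and does not address disconnected $H$; both points are genuine and your treatment of them is correct.
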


\begin{proof} Apply the GYO reduction on $H$ to record the order in which the hyperedges are deleted. Let $e_1,e_2,\ldots, e_m$ be a  deletion order for  the $m$ hyperedges of $H$.
 While constructing covering array on 
$H$, consider the hyperedges in  reverse order of their deletions.  
Let $H_1$ be the hypergraph with the single  hyperedge $e_m=\{v_1,v_2,v_3\}$. 
If $g_1g_2g_3=n$, there exists a trivial balanced covering array $CA(n, H_1, \prod_{i=1}^3{g_i})$.   Otherwise, 
if $g_1g_2g_3\leq n$,   we construct a balanced covering array of size $n$ on $H_1$ as follows: begin with a balanced vector $x_1\in \mathbb Z_{g_1}^n$ corresponds to vertex $v_1$.  From Proposition \ref{new} (using two-vertex hyperedge hooking operation), we get a balanced covering array $CA(n, H_1, \prod_{i=1}^3{g_i})$.   Let $H_2$ be the hypergraph obtained from $H_1$ by adding hyperedge $e_{m-1}$. 
Using single-vertex hyperedge hooking I or two-vertex hyperedge hooking operation, there exists a covering array of size $n$ on $H_2$. For $i=2,3,\ldots,m$, let $H_i=H_{i-1}\cup e_{m+1-i}$. Note that $H_m=H$.  As  $PW(H_i)\leq PW(H)$ for all $i=2,3,\ldots,m$, using single-vertex hyperedge hooking I or two-vertex hyperedge hooking operation, there 
exists a balanced covering array on $H_i$ of size $n$. In particular, there exists a balanced  3-$CA( n, H, \prod_{i=1}^{l} g_{i} )$.
\end{proof}

%\subsection{3-Uniform Interval Hypergraphs}
\begin{definition} \cite{vitaly}  A hypergraph $ H = (V, E) $ is called an  interval hypergraph if there exists a linear 
ordering of the vertices $ v_{1}, v_{2},..., v_{n} $ such that every hyperedge of $ H $ induces an interval in this ordering. 
In other words, the vertices in $ V $ can be placed on the real line such that every hyperedge is an interval.
\end{definition}

\begin{corollary} Let $ H $ be a weighted 3-uniform interval hypergraph with $l$ vertices.
 Then there exists a balanced mixed
 3-$CA( n, H, \prod_{i=1}^{l} g_{i} )$ where $n=PW(H)$.  
 \end{corollary}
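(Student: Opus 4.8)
The plan is to reduce this corollary to Theorem~\ref{thm3} by showing that every weighted $3$-uniform interval hypergraph is $\alpha$-acyclic; once that is established, Theorem~\ref{thm3} immediately yields a balanced $3$-$CA(n,H,\prod_{i=1}^{l}g_{i})$ with $n=PW(H)$. The first step is to record the structural consequence of the interval property in the $3$-uniform case: if $v_{1},v_{2},\ldots,v_{l}$ is a linear ordering making every hyperedge an interval, then each hyperedge, having exactly three vertices, must be of the form $\{v_{i},v_{i+1},v_{i+2}\}$, i.e.\ three consecutive vertices. Moreover, in a simple hypergraph distinct edges have distinct left endpoints, since the unique length-$3$ interval starting at $v_{i}$ is $\{v_{i},v_{i+1},v_{i+2}\}$.

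Given this, I would prove $\alpha$-acyclicity directly through the GYO reduction by peeling from the left end. Because $\bigcup_{i}E_{i}=V$, the leftmost vertex $v_{1}$ lies in some hyperedge, and the only interval of length three that can contain it is $\{v_{1},v_{2},v_{3}\}$; hence $\deg(v_{1})=1$. Applying GYO operation~(1) deletes $v_{1}$ from this edge, leaving $\{v_{2},v_{3}\}$. Now two cases arise: if $\{v_{2},v_{3},v_{4}\}$ is also a hyperedge then $\{v_{2},v_{3}\}\subseteq\{v_{2},v_{3},v_{4}\}$ and operation~(2) removes $\{v_{2},v_{3}\}$ outright; otherwise $v_{2}$ has degree one and the reduced edge is shrunk further by operation~(1), to be eliminated either by operation~(2) (when a later consecutive triple contains it) or by operation~(3) (when it becomes empty). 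In every case the edge originating at $v_{1}$ is eliminated while the edges that remain all lie within $\{v_{2},\ldots,v_{l}\}$ and are again intervals of length three, so the same argument applies to the tail. Iterating, the reduction terminates in the empty hypergraph, so $H$ is $\alpha$-acyclic.

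An equivalent and perhaps cleaner route is to exhibit a running-intersection ordering: list the edges $E_{1},\ldots,E_{m}$ by increasing left endpoint $a_{1}<a_{2}<\cdots<a_{m}$, where $E_{i}=\{v_{a_{i}},v_{a_{i}+1},v_{a_{i}+2}\}$. Since all edges have length three, $E_{i-1}$ has the rightmost reach among $E_{1},\ldots,E_{i-1}$, so any vertex of $E_{i}$ that is already covered by an earlier edge has index in $\{a_{i-1}+1,a_{i-1}+2\}$; hence $E_{i}\cap(E_{1}\cup\cdots\cup E_{i-1})\subseteq E_{i-1}$, which is exactly the running-intersection property characterising $\alpha$-acyclicity. (This route relies on the standard equivalence between that property and the GYO definition, whereas the peeling argument of the previous paragraph is self-contained within the terms of this paper.)

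I expect the main obstacle to be the bookkeeping in the GYO argument rather than any deep idea: one must verify that after reducing the leftmost edge the remaining object is genuinely a $3$-uniform interval hypergraph of the same kind, so that the induction applies, and one must dispatch the degenerate subcases where consecutive triples do or do not overlap and where the reduced edge has size less than three. With $\alpha$-acyclicity in hand, the conclusion follows verbatim from Theorem~\ref{thm3}, giving a balanced mixed $3$-$CA(n,H,\prod_{i=1}^{l}g_{i})$ of optimal size $n=PW(H)$.
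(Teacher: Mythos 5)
Your proposal is correct and follows exactly the paper's route: the paper's entire proof is the one-line observation that every interval hypergraph is $\alpha$-acyclic, so the result follows from Theorem~\ref{thm3}. The only difference is that you actually verify the $\alpha$-acyclicity claim (via the GYO peeling from the leftmost vertex, or equivalently the running-intersection ordering), which the paper asserts without proof; your verification is sound.
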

\begin{proof} This corollary follows immediately from the proof of Theorem \ref{thm3} since every interval hypergraph is $\alpha$-acyclic. 
\end{proof}

\subsubsection{3-Uniform Hypertrees}

In this subsection, we give a construction for optimal mixed covering arrays on some specific conformal 3-uniform hypertrees. 
A $ \textit{host graph} $ for a hypergraph is a connected graph on the same vertex set, such that every hyperedge induces 
a connected subgraph of the host graph \cite{vitaly}.

\begin{definition} \rm(Voloshin \cite{vitaly}). \it A hypergraph $H = (V, E)$ is called a $ hypertree $ if there exists 
a host tree $ T = (V, E^{'}) $ such that each hyperedge $ E_{i} \in \textit{E} $ induces a subtree of $ T $.
\end{definition}
\noindent In other words, any hypertree is isomorphic to some family of subtrees of a tree. 
A 3-uniform hypertree is a hypertree such that each hyperedge in it contains exactly three vertices.

\begin{theorem}\label{thm2}
Let $ H $ be a weighted conformal 3-uniform hypertree with $ l $ vertices, having a binary tree as a host
tree. Then there exists a balanced mixed 3-$CA( n, H, \prod_{i=1}^{l} g_{i} )$ with $ n = PW(H) $. 
\end{theorem}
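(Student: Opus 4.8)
The plan is to build the balanced array on $H$ incrementally and then invoke Theorem~\ref{thm1}, exactly as in the proof of Theorem~\ref{thm3}, but using the host tree (rather than a GYO order) to schedule the hyperedges. Fix a binary host tree $T$ for $H$ and root it at a vertex. Since each hyperedge induces a connected $3$-vertex subtree of $T$, every hyperedge is a path $x\!-\!y\!-\!z$ of two tree-edges, so it is either a \emph{cherry} $\{a,c_1,c_2\}$ with $c_1,c_2$ the two children of its apex $a$, or a \emph{vertical} triple $\{a,b,c\}$ with $a$ the parent of $b$ and $b$ the parent of $c$; in both cases the apex $a$ is the unique vertex of the hyperedge closest to the root. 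I would order the hyperedges $e_1,e_2,\dots,e_m$ by non-decreasing depth of their apex, breaking ties within a common apex by processing cherries before the verticals hanging below them, and starting from a hyperedge $e_1$ whose apex is as close to the root as possible. The base array is a balanced $3$-$CA(n,\{e_1\},\prod g_i)$ of size $n=PW(H)$, obtained as in Theorem~\ref{thm3} by starting from one balanced row and applying a single two-vertex hyperedge hooking (Proposition~\ref{new}); this is legitimate because every hyperedge weight-product is at most $PW(H)=n$, so the hypotheses $g_1g_2h\le n$ of Lemma~\ref{lemma2} hold at every step.

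Next I would add the hyperedges one at a time in this order, letting $U_{i-1}$ denote the vertex set already built. The key structural claim, proved from the binary host tree, is that each $e_i$ ($i\ge2$) meets $U_{i-1}$ in exactly one or exactly two vertices. For a vertical triple $\{a,b,c\}$ the bottom vertex $c$ lies at depth two below the apex and is reached only by the apex-$a$ path $a\!-\!b\!-\!c=e_i$ itself, so $c$ is always new and the shared set lies in $\{a,b\}$; for a cherry the two children sit one level below the apex, and the branching bound of $T$ limits how they can have appeared earlier. When $e_i\cap U_{i-1}$ is a single vertex $w$, the two new vertices are attached by a two-vertex hyperedge hooking (Proposition~\ref{new}); when $e_i\cap U_{i-1}=\{v,w\}$ is a tree-edge, the single new vertex is attached by a single-vertex hyperedge hooking~I, realised at the array level by Lemma~\ref{lemma2} applied to the rows of $v$ and $w$. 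For the latter I must know that $v$ and $w$ are pairwise balanced, i.e. that they already lie together in some earlier hyperedge; I would verify this from the depth analysis, checking in each of the few binary configurations that any earlier hyperedge introducing the deeper of $v,w$ necessarily also contains the shallower one, so $\{v,w\}$ is co-hyperedge in $U_{i-1}$. In every case the weight-product of $e_i$ is at most $PW(H)$, so $PW$ stays equal to $PW(H)$ and $n=PW(H)$ suffices throughout; Theorem~\ref{thm1} then yields the balanced $3$-$CA(n,H,\prod_{i=1}^{l}g_i)$.

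The step I expect to be the main obstacle is ruling out the degenerate case in which an incoming hyperedge $e_i$ has all three of its vertices already in $U_{i-1}$: each operation creates at least one new vertex, so such an $e_i$ could not be installed, and its three rows would not in general be $3$-qualitatively independent. This is precisely where both hypotheses enter. By the previous paragraph a vertical triple always has a new bottom vertex, so only a cherry $\{a,c_1,c_2\}$ can become all-old, and that forces both children to have been introduced earlier through the apex $a$. Under the binary bound and the cherry-before-vertical tie-break, the only earlier routes are the verticals $\{p(a),a,c_1\}$ and $\{p(a),a,c_2\}$, and if both are present then $\{p(a),a,c_1,c_2\}$ is a $K_4$ in $[H]_2$, contradicting conformality; the binary condition is what guarantees $a$ has no further children that could supply additional co-hyperedge pairs or larger cliques. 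Thus conformality plus the binary host tree make the scheduling well defined. Equivalently, one may package this analysis as the statement that such a hypergraph has a chordal $2$-section and is therefore $\alpha$-acyclic, after which Theorem~\ref{thm3} applies directly; the binary hypothesis is essential, as the four $3$-paths through the centre of a star $K_{1,4}$ form a conformal hypertree whose $2$-section contains a chordless $4$-cycle and which is consequently not $\alpha$-acyclic.
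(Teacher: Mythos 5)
Your proof is correct in substance but takes a genuinely different route from the paper. The paper's proof is short and indirect: it argues that conformality plus the binary host tree forbids three hyperedges sharing a vertex with the other three vertices pairwise adjacent (which would force a $4$-clique in $[H]_2$ and hence a size-$4$ hyperedge), concludes that $H$ always has a degree-$1$ vertex, shows the GYO reduction therefore runs to completion so that $H$ is $\alpha$-acyclic, and then simply invokes Theorem~\ref{thm3}. You instead run the incremental construction directly, replacing the reversed GYO order by an ordering of hyperedges by the depth of their apex in the rooted host tree, and you verify by hand, in each of the cherry/vertical configurations, which hypergraph operation applies and why the two shared rows are already pairwise balanced. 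The decisive conformality-plus-binary argument (ruling out an all-old cherry via the $K_4$ in $[H]_2$ on $\{p(a),a,c_1,c_2\}$) is the same obstruction the paper identifies, and your closing remark --- that the analysis can be packaged as ``chordal $2$-section, hence $\alpha$-acyclic, hence Theorem~\ref{thm3}'' --- is essentially the paper's actual proof, reached via the Beeri--Fagin--Maier--Yannakakis characterization rather than via GYO. What your version buys is an explicit, checkable schedule and a concrete accounting of which operation fires at each step, plus the instructive $K_{1,4}$ example showing the binary hypothesis cannot be dropped; what the paper's version buys is brevity, since all the array-building work is delegated to Theorem~\ref{thm3}. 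One point you assert but do not prove is that each $e_i$ with $i\geq 2$ meets $U_{i-1}$ in \emph{at least} one vertex: this does follow from your apex-depth ordering when $H$ is connected (any hyperedge bridging the subtree below the apex $\alpha$ to the rest of $T$ must contain both $\alpha$ and its parent and so has a strictly shallower apex), but it deserves a sentence, and the disconnected case must be handled componentwise; the paper's own proof of Theorem~\ref{thm3} glosses over exactly the same issue.
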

\begin{figure}[h]
 \centering
 \includegraphics[height=5cm]{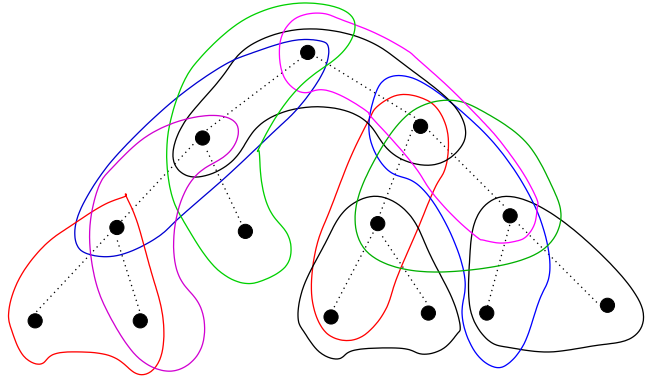}
 % htree.pdf: 372x218 pixel, 72dpi, 13.12x7.69 cm, bb=0 0 372 218
 \caption{A conformal 3-uniform hypertree with a binary host tree}
 \label{fig:1}
\end{figure}

\begin{proof} We claim that $H$ is an $\alpha$-acyclic hypergraph. The reason is this. We do not have three hyperedges in $H$ with a common 
vertex and other 3 vertices pair-wise adjacent  as  conformality implies a hyperedge of size 4. Thus, $H$ has at least one vertex of degree 1. 
Apply the GYO reduction on $H$. At each iteration of the GYO reduction, it produces  a partial hypertree which is again a conformal 3-uniform hypertree having a binary tree as host 
tree. The GYO reduction on
$H$ results in an empty hypertree. Therefore, $H$ is an $\alpha$-acyclic hypergraph. Now the proof follows directly from the proof of Theorem \ref{thm3}.
\end{proof}

\subsection{3-uniform Cycles}
The cyclic structure is very rich in hypergraphs as compare to that in graphs \cite{bee}. It seems difficult  to 
construct optimal size mixed covering arrays on  cycle hypergraphs.  There are few special types of 3-uniform cycles for which we  construct optimal size mixed 
covering arrays.

\begin{theorem}\label{thm4}
 Let $H$ be a weighted 3-uniform cycle $(v_1,E_1,v_2,E_2,...,v_k,E_k,v_1)$ of length $k\geq 3$ on $2k$ vertices satisfying 
 the following conditions.
 \begin{enumerate}
  \item $E_i \cap E_{i+1}= \{v_{i+1}\}$ for $i=1,..,k-1$ and $E_k \cap E_1= \{v_1\}$
  \item $d(u_i)=1$ for $u_i\in E_i\smallsetminus \{v_i ,v_{i+1}\}$ where $i= 1,...,k-1$ and $d(u_k)=1$ for 
  $u_k\in E_k\smallsetminus \{v_k ,v_1\}$
 \end{enumerate}
 
 \begin{figure}[h]
 \centering
 \includegraphics[width=5.5cm]{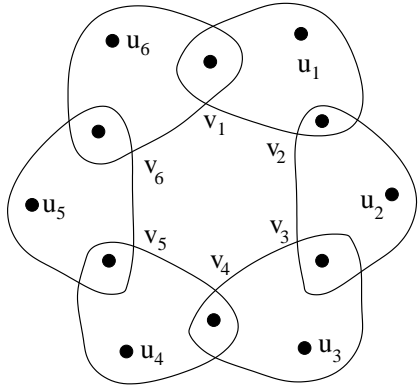}
 % cyc1.pdf: 244x227 pixel, 72dpi, 8.61x8.01 cm, bb=0 0 244 227
 \caption{3-uniform cycle of length-6}
 \label{fig:2}
\end{figure}
\noindent Let $g_i$  and $\omega_i$ denote the weights of vertices $v_i$ and  $u_i$ respectively.  Then there exists a balanced 
 3-$CA( n, H, \prod_{j=1}^{k} g_{j}\omega_j )$ with $n = PW(H)$. 
\end{theorem}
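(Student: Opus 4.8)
The plan is to separate the two roles played by the vertices: the ``core'' vertices $v_1,\dots,v_k$, each of degree two, which carry the genuine cyclic structure, and the pendant vertices $u_1,\dots,u_k$, each of degree one, each belonging to a single hyperedge. Writing the hyperedges as $E_i=\{v_i,v_{i+1},u_i\}$ with indices read modulo $k$ (so that $v_{k+1}=v_1$), I note that $PW(H)=\max_i g_ig_{i+1}\omega_i=n$ and that $n\ge PW(H)$ is the standard lower bound; hence it suffices to \emph{construct} a balanced array of size exactly $n$, and optimality is immediate.

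First I would build the $k$ core rows $x_1,\dots,x_k$, with $x_i\in\mathbb Z_{g_i}^n$ assigned to $v_i$, so that every row is balanced and every cyclically consecutive pair $(x_i,x_{i+1})$ is pairwise balanced. Begin from an arbitrary balanced vector $x_1$. For $i=2,\dots,k-1$ apply Corollary \ref{cor1} to $x_{i-1}$ to obtain a balanced $x_i$ that is pairwise balanced with $x_{i-1}$. It then remains to produce the final core row $x_k$, which must be pairwise balanced with \emph{both} of its cyclic neighbours $x_{k-1}$ and $x_1$; this is precisely the conclusion of Lemma \ref{lemma1} applied to the two balanced vectors $x_{k-1}$ and $x_1$, yielding a balanced $x_k\in\mathbb Z_{g_k}^n$ pairwise balanced with each. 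This closes the cycle, so that all $k$ rows are balanced and every consecutive pair, including the wrap-around pair $(x_k,x_1)$, is pairwise balanced.

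Next I would attach the pendant rows. For each $i$ the pair $(x_i,x_{i+1})$ is pairwise balanced and $g_ig_{i+1}\omega_i\le n$, so Lemma \ref{lemma2} produces a balanced row $y_i\in\mathbb Z_{\omega_i}^n$ for $u_i$ that is $3$-qualitatively independent with $x_i$ and $x_{i+1}$ and pairwise balanced with each of them. Since $u_i$ occurs in no other hyperedge, appending all the $y_i$ makes every hyperedge $E_i$ three-qualitatively independent while keeping every row balanced and every within-hyperedge pair pairwise balanced. The result is a balanced $3\text{-}CA(n,H,\prod_{j=1}^{k}g_j\omega_j)$ with $n=PW(H)$.

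The crux of the argument, and the only point where the cyclic (rather than tree) structure intervenes, is the closing step at $x_k$. On a path one only ever needs a new row pairwise balanced with a single predecessor, whereas the wrap-around forces $x_k$ to be simultaneously pairwise balanced with two prescribed neighbours. What makes this harmless here is that Lemma \ref{lemma1} carries \emph{no} size restriction on $h$, so the two-sided balancing is always achievable; the heavier strength-three requirement is entirely absorbed by the degree-one pendants $u_i$ via Lemma \ref{lemma2}, whose size condition $g_ig_{i+1}\omega_i\le n$ is guaranteed by the choice $n=PW(H)$. I expect the only remaining care to be routine index bookkeeping modulo $k$ and a direct check of the base case $k=3$.
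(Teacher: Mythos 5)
Your proposal is correct and follows essentially the same route as the paper: both build the degree-two core cycle $v_1,\dots,v_k$ with consecutive rows pairwise balanced, close the wrap-around by invoking Lemma \ref{lemma1} to make the last core row pairwise balanced with both cyclic neighbours, and then attach each degree-one pendant $u_i$ via Lemma \ref{lemma2} using $g_i\omega_ig_{i+1}\le PW(H)$. The paper merely packages these same lemma applications as its hypergraph operations (edge hooking I/II and hyperedge hooking I) and seeds the construction with the maximal-weight hyperedge, which is an inessential difference.
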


\begin{proof} Let $\{v_1,u_1,v_2\}$ be a hyperedge in $H$ with  $g_1\omega_1g_2=PW(H)$. Let $H_1$ be the hypergraph with the single hyperedge $\{v_1,u_1,v_2\}$. 
There exists a balanced covering array 3-$CA(n,H_1, \omega_1\prod_{j=1}^2 g_j)$. 
For $i=2,3,\ldots, k-1$, let $H_i$ be the  hypergraph obtained from $H_{i-1}$ after inserting a  new edge $\{v_i,v_{i+1}\}$ in which $v_{i+1}$ is a new vertex, that is, $H_i=H_{i-1}\cup\{v_i,v_{i+1}\}$.  Using Proposition \ref{prop1} (single-vertex edge hooking I operation), for all $i=2,3,\ldots, k-1$, as $g_ig_{i+1} \leq n$, there exists a balanced 
$CA(n, H_i, \omega_1\prod_{j=1}^{i+1} g_j)$. Let $H_k=H_{k-1}\cup \{\{v_{k-1},v_k\}, \{v_k,v_1\}\}$.  Using single vertex edge hooking II operation, as $g_{k-1}g_k\leq n$ and 
$g_1g_k\leq n$, we get a balanced covering array $CA(n, H_k, \omega_1\prod_{j=1}^{k} g_j)$.  Finally, using sequence of single-vertex hyperedge hooking I operations on $H_k$,
 replace edge $\{v_i,v_{i+1}\}$
by hyperedge $\{v_i,u_i,v_{i+1}\}$ for $i=2,3,\ldots, k-1$; also replace edge $\{v_k,v_1\}$ by hyperedge $\{v_k,u_k,v_1\}$. As $g_i\omega_ig_{i+1}\leq n$ for all $i=2,3,\ldots,k-2$ and 
$g_k\omega_kg_1\leq n$, from Proposition \ref{prop1} (using single-vertex hyperedge hooking I), there exists a balanced 3-$CA( n, H, \prod_{j=1}^{k} g_{j}\omega_j )$.  
\end{proof}

\noindent The length-$k$ 3-uniform cycle considered in Theorem \ref{thm4} contains $k$ vertices of degree 1. As every hyperedge has 
one vertex of degree 1, such hypergraph satisfies $|E(H)|= |V(H)|/2$.

\section{Further Cycle Hypergraphs} In this section,  we consider 3-uniform cycles of length $k$ with exactly one vertex of degree  1. This type of 3-uniform 
hypergraphs have $|E(H)|= |V(H)|-2$.  Construction of  
optimal size mixed covering arrays on such cycle hypergraphs seems to be difficult. 

Let $H$ be a weighted 3-uniform cycle $(v_0,E_1,v_2,E_2,v_3,E_3,v_0)$ of length-3 on five vertices 
 with $E_1=\{v_0,v_1,v_2\}$, $E_2=\{v_{1},v_2,v_{3}\}$ and $E_3=\{v_3,v_4,v_0\}$ as shown in Figure \ref{diffcycle}. 
  Let $E_1$ be a hyperedge in $H$ with $g_0g_1g_2=PW(H)$ where $g_i$ denotes the weight of vertex $v_i$. Let $H_1$ be the hypergraph with the single hyperedge $E_1$. There exists a balanced covering array $CA(n,H_1, \prod_{i=0}^2 g_i)$ where $n=PW(H)$.
Let $H_2=H_1\cup\{E_2\}$.  Using Proposition \ref{prop1} (single-vertex hyperedge hooking I), there exists a balanced covering array   $CA(n,H_2, \prod_{i=0}^3 g_i)$. Let
$H_3=H_2 \cup \{E_3\}$. Note that $H_3=H$. We cannot use any of the known hypergraph operations to construct a balanced covering array of size $PW(H)$ on $H_3$ as the 
rows correspond to $v_0$ and $v_3$ are not pairwise balanced.   Thus we define a new hypergraph operation called {\it single vertex hyperedge hooking} II operation.  A  single vertex hyperedge hooking II in a  hypergraph $H$ 
 is the operation that inserts a new hyperedge $\{u,v,w\}$  and a new edge $\{u,z\}$ where $\{v,w,z\}$ is an existing hyperedge in $H$ and $u$ is a new vertex.   
 
\begin{figure}[h]
\label{diffcycle}
 \centering
 \includegraphics[height=4cm]{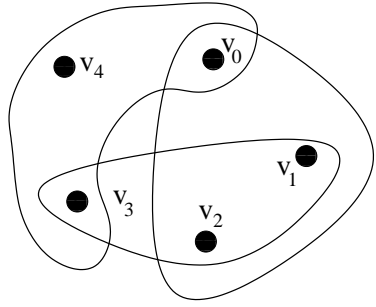}
 % cyc2.pdf: 0x0 pixel, 0dpi, 0.00x0.00 cm, bb=
 \caption{cycle of lenghth 3 with $g_0=10, g_1=8, g_2=5, g_3= 2, g_4=18$}
 \label{fig: cycle of length3}
\end{figure}

\subsection{Balanced Partitioning}
Let $ g_{1}, g_{2}, g_{3}$ and $n\geq g_1g_2g_3$ be positive integers and $ x_{1} \in \mathbb Z_{g_{1}}^{n}$, $ x_{2} \in \mathbb Z_{g_{2}}^{n} $ 
and $ x_{3} \in \mathbb Z_{g_{3}}^{n} $ be mutually pairwise balanced  and 3-qualitatively independent vectors. Then, we prove in this section, there
exists  a balanced vector $y\in \mathbb Z_{h}^{n}$, where $h$ satisfies certain conditions, such that $\lbrace x_{1} , x_{2} , y \rbrace$ are 3-qualitatively 
and $y$ is pairwise balanced with each $x_i$ for $i=1,2,3$. 

 We construct a tripartite 3-uniform multi-hypergraph $G$ corresponds to $x_1$,  $x_2$ and $x_3$ as follows: $G$ has $g_1$ vertices 
in the first part $P\subseteq V(G)$, $g_2$ vertices in the second part $Q\subseteq V(G)$ and 
$g_3$ vertices in the third part $R\subseteq V(G)$. 
 Let $P_a=\{i~|~x_1(i)=a\}$ for $a=0,1,\ldots, g_1-1$, be the vertices of $P$, 
 $Q_b=\{i~|~x_2(i)=b\}$ for $b=0, 1, \ldots,g_2-1$, be the vertices of $Q$, and 
 $R_c=\{i~|~x_3(i)=c\}$ for $c=0, 1, \ldots,g_3-1$, be the vertices of $R$. For each $i=1,2,\ldots,n$ there exists exactly one $P_a\in P$ with $i\in P_a$, exactly one $Q_b\in Q$ with $i\in Q_b$ and 
exactly one $R_c\in R$ with $i\in R_c$. For each such $i$, add a hyperedge $\{P_a,Q_b,R_c\}$ and label it $i$.  Clearly, $d_G(P_a)=|P_a|$,  $d_G(Q_b)=|Q_b|$ and $d_G(R_c)=|R_c|$.  Let $h$ be a
positive integer so that $h\leq \min\{\lfloor\frac{n}{g_1g_2}\rfloor, \lfloor\frac{n}{g_1g_3}\rfloor\}$ and $$\lfloor\frac{n}{g_1g_2}\rfloor \equiv 0  \mbox{~~~mod~~} h~~~\mbox{ for} ~h\geq 3.$$ That is, for each pair $(a,b)\in \mathbb Z_{g_1} \times \mathbb Z_{g_2}$, the number $d_G(P_aQ_b)$ of hyperedges containing $P_a$ and $Q_b$ is either $0$ or $1$ mod $h$. Clearly,
$d_G(P_aQ_b)=|P_a\cap Q_b|$.
We construct a tripartite 3-uniform hypergraph $H$ with maximum degree $h$ from $G$ as follows: We split each vertex $v\in V(G)$ in $G$ into 
$\lfloor\frac{d_G(v)}{h} \rfloor$ vertices of degree $h$ and, if necessary, one vertex of degree less than $h$.  Using balancedness of $x_1$, we have that
there are at least $g_2$ copies of $P_a$ in $H$ from the split operation. Label them $P^{l}_{a0}, \dots P^{l}_{a,g_2-1}, \mathcal{E}_a^1, \mathcal{E}_a^2,\ldots$ for $l=1,2,\ldots,
\lfloor\frac{d_G(P_aQ_b)}{h}\rfloor$.  Similarly, there are at least
$g_1$ copies of $Q_b$, label them  $Q^{l}_{b0}, \dots Q^{l}_{b,g_1-1}, \mathcal{F}_b^1, \mathcal{F}_b^2,\ldots$  for $l=1,2 \ldots, \lfloor\frac{d_G(P_aQ_b)}{h}\rfloor$ and at least $g_1$ copies of $R_c$; label them 
$R^l_{c0},\ldots,R^l_{c,g_1-1}$, $\mathcal{G}_c^1,\mathcal{G}_c^2,\ldots$ for $l=1,2 \ldots, \lfloor\frac{d_G(P_aR_c)}{h}\rfloor$. \\

Each $P_a$ is split as follows: We have either $sh$ or $sh+1$ hyperedges containing $P_a$ and $Q_b$ for $b=0,1,\ldots,g_2-1$ where $s=\lfloor \frac{d_G(P_aQ_b)}{h}\rfloor$.  
Choose a $c\in Z_{g_3}$ (not necessarily distinct for different $a$). 
If the number of hyperedges containing $P_a$ and $Q_b$ is $sh+1$, we pick one hyperedge  $i\in P_a\cap Q_b$ so that $x_3(i)=c$. This is possible as $x_1,x_2,x_3$ are
3-qualitatively independent.  Let $E_{a}$ be the collection of all those hyperedges for 
$b=0,1,\ldots,g_2-1$; clearly $|E_a|\leq g_2$. Split 
$E_{a}$ into $\lfloor \frac{|E_{a}|}{h}\rfloor$ vertices of degree $h$ and, if necessary, one vertex of degree less than $h$. Denote these vertices as $\mathcal{E}_{a}^l$ for $l=1,2,\ldots, \lfloor \frac{|E_a|}{h}\rfloor+1$. Beside the hyperedges in $E_{a}$, we have exactly $sh$ hyperedges containing  $P_a$ and $Q_b$ . These $sh$
hyperedges  are partitioned into $s$ equal parts.  The $h$ hyperedges in one part become $h$ hyperedges containing $P^{l}_{ab}$ and $Q^{l}_{ba}$, $l=1,2,\ldots,s$, in $H$.  \\
Each $Q_b$ is split as follows: For $a=0,1,\ldots,g_1-1$, set $Q^{l}_{ba}=P^{l}_{ab}$. Distribute the remaining elements of $Q_b$ into vertices of degree $h$ and, if necessary, one vertex of degree less than $h$.  Denote these vertices as $\mathcal{F}_b^l$.  Each $R_{c}$ is split as follows:   $R_{c}$ is split so that $ \mathcal{E}_a^l  \subseteq R^{l}_{ca}$.   Distribute the remaining elements of $R_c$ into vertices of degree $h$ and, if necessary, one vertex of degree less than $h$. It is easy to observe that this partitioning of $P_a$, $Q_b$ and $R_c$ is not uniquely determined.

\begin{lemma}\label{BP}
$H$  is  balanced hypergraph with maximum degree $\Delta(H)= h$.
\end{lemma}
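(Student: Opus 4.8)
The plan is to treat the two assertions separately, disposing of the degree bound quickly and spending the real effort on balancedness.

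\emph{Maximum degree.} First I would read the bound off the split construction directly. By definition every vertex of $H$ arises by splitting some $v\in V(G)$ into $\lfloor d_G(v)/h\rfloor$ vertices of degree exactly $h$ together with at most one vertex of smaller degree, so no vertex of $H$ has degree exceeding $h$; that is, $\Delta(H)\le h$. To see the bound is attained I would use balancedness of $x_1$: for any $P_a$ we have $d_G(P_a)=|P_a|\ge\lfloor n/g_1\rfloor\ge\lfloor n/(g_1g_2)\rfloor\ge h$, so $\lfloor d_G(P_a)/h\rfloor\ge 1$ and the split of $P_a$ produces at least one vertex of degree exactly $h$. Hence $\Delta(H)=h$.

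\emph{Balancedness.} For this I would invoke Theorem \ref{2color} and prove instead that every induced subhypergraph $H_A$ of $H$ is $2$-colourable, whence balancedness follows. The natural first move is to exploit that $H$ is tripartite $3$-uniform with parts $P',Q',R'$ (the split copies of $P,Q,R$): colouring $P'$ with one colour and $Q'\cup R'$ with the other makes every hyperedge of $H$ itself bichromatic. I would immediately flag, however, that tripartiteness \emph{alone} is not enough, since an induced subhypergraph can collapse a tripartite $3$-uniform hypergraph onto a monochromatic triangle $K_3$ (three pairwise co-occurring vertices with no common hyperedge), which is an odd cycle and not $2$-colourable. So the real content is to show that the coordinated way in which $G$ was split rules out exactly such configurations.

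\emph{Key step and main obstacle.} The crux is the consistency built into the split: the $l$-th block of $P_a$ against $Q_b$ and the $l$-th block of $Q_b$ against $P_a$ are identified as one and the same index set ($Q^{l}_{ba}=P^{l}_{ab}$), and the leftover indices $\mathcal{E}^{l}_a$ are placed inside a matched $R$-block ($\mathcal{E}^{l}_a\subseteq R^{l}_{ca}$). I would use these identifications to track an arbitrary odd cycle $(w_1,f_1,\dots,w_{2t+1},f_{2t+1},w_1)$ in any $H_A$: since consecutive cycle vertices lie in distinct parts, the sequence of part-labels around the cycle is a cyclic $\{P',Q',R'\}$-word with no two adjacent letters equal, and the identifications force, at each shared vertex, that the $G$-coordinates $(a,b,c)$ carried by the incident hyperedges agree on the two coordinates pinned down by that vertex. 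Propagating these constraints around an odd closed walk, I expect to find that the coordinates cannot be globally consistent unless two of the cycle vertices already lie in a common hyperedge, i.e. some $f_j$ meets three of the $w$'s; this is precisely the balanced condition, and equivalently it exhibits the required $2$-colouring of $H_A$. The hard part will be making this propagation argument airtight for all odd lengths $2t+1$ and all three part-patterns, rather than only for the length-$3$ ``strong triangle'' case; this is where the divisibility hypothesis $\lfloor n/(g_1g_2)\rfloor\equiv 0 \pmod h$ and the pairwise-balance counts $d_G(P_aQ_b)\equiv 0$ or $1 \pmod h$ do their work, since they are what let the $sh$ paired indices be distributed cleanly while isolating the single leftover index into the $\mathcal{E}$-blocks.
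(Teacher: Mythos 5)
Your overall strategy coincides with the paper's: both reduce balancedness to $2$-colourability of every induced subhypergraph $H_A$ via Theorem \ref{2color}, both observe that tripartiteness disposes of the size-$3$ traces but that the genuine danger is an odd cycle among the $2$-element traces $E\cap A$, and both single out the same structural ingredients ($Q^{l}_{ba}=P^{l}_{ab}$ and $\mathcal{E}^{l}_a\subseteq R^{l}_{ca}$) as what must rule such cycles out. The degree computation is fine. But the heart of the lemma --- showing that every $2$-uniform cycle of $H_A$ is even --- is exactly the step you leave as a plan (``I expect to find\dots'', ``the hard part will be making this propagation argument airtight''), so as written the proof has a genuine gap precisely where the work is.

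Moreover, the route you sketch (propagating $(a,b,c)$-coordinates around an arbitrary odd closed walk and deriving a global inconsistency) is heavier than what is needed, and it is not clear it closes for all part-patterns. The paper's argument is local and parity-based. Take any $2$-uniform cycle $C$ in $H_A$ meeting all three parts and consider a cycle vertex $v\in X_1$. If $v=P^{l}_{ab}$, its only neighbour in $X_2$ is $Q^{l}_{ba}$, so that edge cannot serve as a cycle edge and both neighbours of $v$ on $C$ lie in $X_3$; if $v=\mathcal{E}^{l}_a$, its only neighbour in $X_3$ is $R^{l}_{ca}$, so both neighbours of $v$ on $C$ lie in $X_2$. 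Identifying the two $C$-neighbours of each such $v$ shortens $C$ by $2$ each time and leaves a cycle alternating between $X_2$ and $X_3$, which is necessarily even; since the total reduction is even, $C$ itself is even. No coordinate bookkeeping around the whole cycle is required, and the divisibility hypothesis $\lfloor n/(g_1g_2)\rfloor\equiv 0 \pmod{h}$ enters only in the construction of $H$ (it keeps the leftover indices for each pair $(a,b)$ down to at most one, which is what makes the $\mathcal{E}^{l}_a$ and the identifications above well defined), not in the odd-cycle argument itself. To repair your proof you should either carry out this contraction argument or give a complete version of your propagation argument; at present neither is done.
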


\begin{proof}
Hypergraph $H$ has $V(H)=X_1\cup X_2 \cup X_3$ as vertex set where 
$X_1=\{P_{ab}^l, \mathcal{E}_a^l \mid a\in \mathbb Z_{g_{1}}, b\in \mathbb Z_{g_{2}}, l\in \mathbb{N}\}$, 
$X_2=\{Q_{ba}^l, \mathcal{F}_b^l \mid a\in \mathbb Z_{g_{1}}, b\in \mathbb Z_{g_{2}}, l\in \mathbb{N}\}$ 
and $X_3=\{R_{ca}^l, \mathcal{G}_c^l \mid a\in \mathbb Z_{g_{1}}, c\in \mathbb Z_{g_{3}}, l\in \mathbb{N}\}$. 
 Let $A\subset V(H)$ and $H_A$ be the subhypergraph induced by $A$.  From Theorem \ref{2color}, it suffices to show that $H_A$ is 2-colourable. Later part of proof 
deals with 2-colouring of $H_A$ which is based on the following cases.\\
\noindent\textit{Case 1:}  $A\cap X_i =\emptyset$ for two choices of $i\in \{1,2,3\}$. Without loss of generality we assume 
$A\cap X_1 =\emptyset$ and $A\cap X_2 =\emptyset$, that is, $A$ intersects only with $X_3$. Being $H$ a tripartite hypergraph,
$A$ is an independent set in this case and $H_A$ has no hyperedges. Hence it is 2-colourable.
\\\textit{Case 2:}  $A\cap X_i =\emptyset$ for exactly one $i$. Without loss of generality we assume 
$A\cap X_1 =\emptyset$. As $A$ intersects with  $X_2$ and $X_3$, the induced sub-hypergraph $H_A$ is a bipartite graph between $X_2$ and $X_3$. 
Hence $H_A$ is 2-colourable.
\\\textit{Case 3:}  $A\cap X_i \neq \emptyset$ for all $i$. We claim that $H_A$ is union of a 3-uniform partial hypergraph of $H$ 
and a bipartite graph on $A$. Every partial hypergraph of $H$ is 2-colourable as $H$ is 2-colourable. Consider a 2-colouring of 
bipartite graph induced by subhypergraph and extend this to 2-colouring of 3-uniform partial hypergraph to produce a 2-colouring of $H_A$. 
To show that subgraph induced by $A$ is a bipartite graph consider a 2-uniform cycle $C$ in $H_A$.
If $C$ does not intersect some $X_i$ then it alternates between vertices of only two 
partite sets and 
turns out as a bipartite graph.  Now we assume $C$ intersects each partite set
$X_1, X_2$  and $X_3$.  Consider a vertex $v \in C\cap X_1$. We denote by $N_H(v)$  the set of neighbours of $v$ in $H$. 
There are two types of vertices in $X_1$ either of the form $P^{l}_{ab}$ or of the form $E_a^l$. If $v$ is 
$P^l_{ab}$ then $N_H(v)\cap X_2$ has only one vertex which is  $Q^{l}_{ba}$. 
Hence the edge $P^{l}_{ab}Q^{l}_{ba}$ cannot 
be  part of any cycle in $H_A$. Consequently both neighbours  of $P^{l}_{ab}$ in $C$ are from $X_3$ and corresponding incident 
edges in $C$ are induced only if $Q^l_{ba} \notin A$. If $v$ is $\mathcal{E}_a^l$ then $N_H(v)\cap X_3$ has only one vertex 
which is some $R^l_{ca}$.  Hence the edge $\mathcal{E}_a^lR^l_{ca}$ cannot be  part of  cycle  $C$. Consequently both 
neighbours of $\mathcal{E}_a^l$  in $C$ are from $X_2$ and corresponding incident edges in $C$ are induced only if 
$R^l_{ca} \notin A$. Thus either $N_C(v)\subset X_2$ or  $N_C(v)\subset X_3$. 
We identify the neighbours  $N_C(v)\in X_i$ as a  single vertex $N(v)$ from $X_i$. This identification operation reduces the length
of $C$ by  two and creates a smaller cycle with $v$ hanging out side of this new cycle by an edge incident at 
$N(v)$ with multiplicity 2. After performing identification for each $v\in C\cap X_1$, we left with a cycle $C'$ that 
alternates between vertices in $X_2$ and $X_3$. Consequently $C'$ has to be of  even length. 
Each identification operation reduces the length of $C$ by 2 whence total 
reduction in length is even. The length of $C$ is equal to sum of length of $C'$ and the total reduction and hence it is an 
even integer. This shows that $H_A$ does not contain any odd length 2-uniform cycle.   
\end{proof}

\begin{definition}\cite{berge}
 A matching in a hypergraph $H$ is a family of pairwise disjoint hyperedges. In other words matching is a partial hypergraph 
 $H_0$ with maximum degree $\Delta (H_0)= 1$.
\end{definition}

\begin{theorem}\label{balanced}\cite{kapoor}
 The hyperedges of a balanced hypergraph $ H $ with maximum degree $ \Delta$, can be partitioned into $ \Delta $ matchings.
\end{theorem}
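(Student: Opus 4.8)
The plan is to induct on the maximum degree $\Delta = \Delta(H)$. The base case $\Delta = 1$ is immediate: no vertex lies in two hyperedges, so the whole edge set $E(H)$ is already a single matching. For the inductive step I would reduce everything to one lemma: \emph{a balanced hypergraph $H$ always contains a matching $M$ that meets every vertex of degree exactly $\Delta$}. Granting the lemma, form the partial hypergraph $H' = (V, E(H)\setminus M)$. Each degree-$\Delta$ vertex loses exactly one incident hyperedge to $M$, so $\Delta(H') \le \Delta - 1$. Moreover $H'$ is still balanced: for any $A \subseteq V$ the induced subhypergraph $H'_A$ has a subset of the truncated hyperedges of $H_A$, so any proper $2$-colouring of $H_A$ is also a proper $2$-colouring of $H'_A$; since $H$ is balanced all $H_A$ are $2$-colourable by Theorem \ref{2color}, hence so are all $H'_A$, and the converse direction of Theorem \ref{2color} then gives that $H'$ is balanced. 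By the induction hypothesis $H'$ decomposes into $\Delta - 1$ matchings, and adjoining $M$ yields the desired $\Delta$ matchings.

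The content is therefore the saturating-matching lemma, and this is where the hypothesis of balancedness is essential. I would prove it by an extremal argument. Let $W$ be the set of degree-$\Delta$ vertices, and pick a matching $M$ saturating as many vertices of $W$ as possible; suppose toward a contradiction that some $x_0 \in W$ is left unsaturated. Every one of the $\Delta$ hyperedges through $x_0$ must already meet $M$, since otherwise adding such a hyperedge to $M$ would saturate $x_0$ and contradict maximality. Thus I would grow an alternating search rooted at $x_0$, alternately following hyperedges outside $M$ and hyperedges inside $M$, and record the set $A$ of vertices reached. The goal is an alternating \emph{augmentation}: a prescription for deleting certain hyperedges of $M$ along $A$ and inserting others so that $x_0$ becomes saturated while no previously saturated vertex of $W$ is lost, contradicting the choice of $M$.

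The only obstruction to such an augmentation is an odd-parity defect in the alternating structure, and ruling this out is exactly what balancedness buys. I would restrict to the subhypergraph $H_A$ induced by the reached vertices and invoke Theorem \ref{2color}: balancedness makes $H_A$ $2$-colourable, and a proper $2$-colouring sorts the hyperedges touched by the search into two classes whose roles in $M$ can be interchanged, realizing the augmentation. The technical heart, and the step I expect to be hardest, is managing this exchange in the genuinely hypergraph setting, where each hyperedge carries three vertices rather than two: one must check that swapping $M$-membership along the two colour classes neither produces a vertex lying in two chosen hyperedges (so that the result is still a matching) nor desaturates any vertex of $W$. The surrounding induction, together with the closure of balancedness under edge deletion, is routine once this augmentation step is secured.
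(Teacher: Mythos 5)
The paper does not prove this statement at all: it is quoted from the literature (the citation \cite{kapoor}; the result is the classical ``coloured edge property'' of balanced hypergraphs going back to Berge), so there is no in-paper argument to compare yours against. Judged on its own, your outer induction is the standard and correct reduction: partial hypergraphs of a balanced hypergraph are balanced (your argument via Theorem \ref{2color} works, and it is even more immediate from the odd-cycle definition, since deleting hyperedges cannot create new cycles), so everything does come down to the lemma that a balanced hypergraph has a matching meeting every vertex of maximum degree.

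The gap is that this lemma is essentially the whole content of the theorem, and you do not prove it. Your sketch of an augmenting argument does not go through as described. First, the proper $2$-colouring of $H_A$ supplied by Theorem \ref{2color} is a colouring of \emph{vertices} under which no hyperedge is monochromatic; it does not sort the hyperedges reached by your search into two classes, and there is no evident way to extract from it the ``interchange of roles in $M$'' you invoke. Second, the exchange step itself is exactly where bipartite-graph intuition breaks: when a hyperedge enters the matching it saturates three vertices, each of which may meet several different hyperedges of $M$ encountered at different stages of the search, so swapping along an alternating structure can simultaneously desaturate vertices of $W$ and create new conflicts that are not repaired locally. Controlling this is the substance of the theorem; in the literature it is handled either via integrality of balanced matrices (linear-programming duality) or by the rather involved combinatorial machinery of Conforti, Cornu\'ejols, Kapoor and Vu\v{s}kovi\'c. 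As written, your proposal reduces the theorem to an unproved statement of essentially equal depth, so it is not yet a proof.
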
 

\begin{lemma}\label{lemma3}
Let $ x_{1} \in \mathbb Z_{g_{1}}^{n}$, $ x_{2} \in \mathbb Z_{g_{2}}^{n} $ and $ x_{3} \in \mathbb Z_{g_{3}}^{n}  $ be
mutually pairwise balanced and 3-qualitatively independent vectors. Let $h$ be a
positive integer so that $h\leq \min\{\lfloor\frac{n}{g_1g_2}\rfloor, \lfloor\frac{n}{g_1g_3}\rfloor\}$ and for $h\geq 3$,  $$\lfloor\frac{n}{g_1g_2}\rfloor \equiv 0  \mbox{~~~mod~~} h.$$ 
Then there exists a balanced vector $y\in \mathbb Z_{h}^{n}$ such that $\lbrace x_{1} , x_{2} , y \rbrace$ are 3-qualitatively independent 
and $y$ is pairwise balanced with each $x_i$ for $i=1,2,3$.
\end{lemma}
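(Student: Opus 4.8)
The plan is to exploit the balanced hypergraph $H$ constructed above together with Theorem \ref{balanced}. Since Lemma \ref{BP} shows $H$ is balanced with $\Delta(H)=h$, Theorem \ref{balanced} lets me partition the hyperedge set $E(H)$ into $h$ matchings $M_0,M_1,\ldots,M_{h-1}$. Identifying all the split copies of a vertex of $G$ back to that vertex carries these matchings to partial hypergraphs $M_0',\ldots,M_{h-1}'$ of $G$; since every hyperedge of $G$ (equivalently, every index $i\in[1,n]$) lies in exactly one of them, I define $y$ by setting $y(i)=c$ whenever the hyperedge labelled $i$ belongs to $M_c'$. This is precisely the hypergraph analogue of the bipartite colour classes used in Lemmas \ref{lemma1} and \ref{lemma2}.

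For 3-qualitative independence of $\{x_1,x_2,y\}$ I use the key feature of the split: for each $a\in\mathbb Z_{g_1}$, $b\in\mathbb Z_{g_2}$ there is a pair of degree-$h$ vertices $P_{ab}^l,Q_{ba}^l$ joined by exactly $h$ hyperedges, and because these $h$ hyperedges are all incident to a vertex of degree exactly $h$ while there are $h$ matchings, the pigeonhole principle forces exactly one of them into each $M_c$. After identification this produces, for every $(a,b,c)\in\mathbb Z_{g_1}\times\mathbb Z_{g_2}\times\mathbb Z_h$, an index $i$ with $x_1(i)=a$, $x_2(i)=b$, $y(i)=c$, which is exactly 3-qualitative independence of the triple.

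For pairwise balance I reuse the degree-counting argument of Lemmas \ref{lemma1} and \ref{lemma2}. Since $M_c$ is a matching, at most one $M_c$-hyperedge meets each of the $\lceil d_G(P_a)/h\rceil$ copies of $P_a$, giving $d_{M_c'}(P_a)\le\lceil d_G(P_a)/h\rceil$, while every one of the $\lfloor d_G(P_a)/h\rfloor$ degree-$h$ copies forces one such hyperedge, giving $d_{M_c'}(P_a)\ge\lfloor d_G(P_a)/h\rfloor$; balancedness of $x_1$ then yields $\lfloor n/(g_1h)\rfloor\le d_{M_c'}(P_a)\le\lceil n/(g_1h)\rceil$, i.e. $x_1$ and $y$ are pairwise balanced. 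The identical count applied to the copies of $Q_b$ and of $R_c$, using balancedness of $x_2$ and $x_3$, shows that $y$ is pairwise balanced with $x_2$ and with $x_3$.

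The last property --- that $y$ itself is balanced, equivalently that the $h$ matchings have sizes $\lfloor n/h\rfloor$ or $\lceil n/h\rceil$ --- is where I expect the real work. Theorem \ref{balanced} guarantees a partition into $h$ matchings but gives no control over their sizes, and (unlike the bipartite situation of Lemma \ref{lemma1}) the union of two matchings in a $3$-uniform hypergraph need not decompose into paths and even cycles, so the clean alternating-path swap is not directly available. My intended approach mirrors Lemma \ref{lemma1}: if two matchings differ in size by more than one, I locate an alternating structure in their union along which a single hyperedge can be transferred from the larger matching to the smaller while keeping both sets matchings, and iterate until every size lies in $\{\lfloor n/h\rfloor,\lceil n/h\rceil\}$. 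Making such a transfer legitimate in the hypergraph setting is the main obstacle, and I expect the hypotheses $h\le\min\{\lfloor n/(g_1g_2)\rfloor,\lfloor n/(g_1g_3)\rfloor\}$ and $\lfloor n/(g_1g_2)\rfloor\equiv 0 \pmod h$ (for $h\ge 3$) to be exactly what makes the transfers possible without disturbing the 3-qualitative independence and pairwise balance already secured.
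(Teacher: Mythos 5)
Your proposal follows the paper's proof essentially step for step: the same reduction to the balanced tripartite hypergraph $H$ of Lemma \ref{BP}, the same appeal to Theorem \ref{balanced} to split $E(H)$ into $h$ matchings, the same identification back to partial hypergraphs of $G$ to define $y$, the same pigeonhole argument on the $h$ hyperedges joining $P^l_{ab}$ and $Q^l_{ba}$ for 3-qualitative independence, and the same degree count for pairwise balance. The one step you leave open --- balancedness of $y$ --- is exactly where the paper stops short as well: it simply asserts that the union of two matchings decomposes into alternating even cycle hypergraphs and alternating paths and performs the same swap as in Lemma \ref{lemma1}, so your worry about whether that decomposition survives in the $3$-uniform setting is not resolved by the paper either. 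One small correction to your closing expectation: the hypotheses $h\le\min\{\lfloor n/(g_1g_2)\rfloor,\lfloor n/(g_1g_3)\rfloor\}$ and $\lfloor n/(g_1g_2)\rfloor\equiv 0\pmod h$ are not invoked in the balancing step; they are consumed earlier, in the construction of $H$, where they guarantee that $d_G(P_aQ_b)$ is $0$ or $1$ modulo $h$ (so at most one leftover hyperedge per pair $(a,b)$ goes into $E_a$) and that each vertex of $G$ splits into enough degree-$h$ copies to realize all the pairs $P^l_{ab}$, $Q^l_{ba}$.
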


\begin{proof}
Construct a tripartite 3-uniform hypergraph $H$ corresponding to $x_1, x_2$ and $x_3$ as described above.
Lemma \ref{BP} implies that $H$ is a balanced hypergraph 
having maximum degree $\Delta(H)=h$. Theorem \ref{balanced} says that $E(H)$ is union of $h$ edge-disjoint matching $F_0$, $F_1, \dots, F_{h-1}$.
Identify those points of $H$ which corresponds to the same point of $G$, then $F_0$, $F_1$, \ldots, $F_{h-1}$
are mapped onto certain edge disjoint spanning  partial hypergraphs $F^{\prime}_0$, $F^{\prime}_1$, \ldots, $F^{\prime}_{h-1}$ of $G$. 
These $h$ edge-disjoint spanning partial hypergraphs $F^{\prime}_0$, $F^{\prime}_1$, \ldots, $F^{\prime}_{h-1}$ of $G$ form a partition of $E(G)=[1,n]$ which we use to build a balanced vector $y\in \mathbb Z_h^n$.
Each edge disjoint spanning partial hypergraph corresponds to a symbol in $\mathbb Z_h$ and each edge corresponds to an index from $[1,n]$.
Suppose edge disjoint spanning partial hypergraph $F^{\prime}_d$ corresponds to symbol
$d\in \mathbb Z_h$. For each edge $i$ in $F^{\prime}_d$, define $y(i)=d$.  We have
$$\lfloor\frac{n}{g_1h}\rfloor\leq d_{{F_d}^{\prime}}(P_a)\leq \lceil\frac{n}{g_1h}\rceil $$  for $d=0,1,\ldots,h-1.$
It follows from similar arguments as in Lemma \ref{lemma2}. 
 Similarly $y$ is pairwise balanced with $x_2$ and $x_3$.
Now we show that $x_1$, $x_2$, $y$ are 3-qualitatively independent. 
Let $(a,b,d)\in \mathbb Z_{g_1} \times \mathbb Z_{g_2} \times \mathbb Z_h$ be a tuple of symbols. 
For every $a\in \mathbb Z_{g_1}$, $b\in\mathbb Z_{g_2} $, there are $h$ hyperedges containing $P^l_{ab}$ and $Q^l_{ba}$ in $H$, 
and they will all appear in different matchings $F_0$, $F_1$, \ldots, $F_{h-1}$.
This ensures that each spanning partial hypergraph contains at least one $P_a-Q_b$ hyperedge for every $a\in \mathbb Z_{g_1}$, $b\in\mathbb Z_{g_2} $.
Whence there exists at least one hyperedge $i\in F^{\prime}_d$ such that 
$x_1(i)= a$, $x_2(i)=b$ and $y(i)=d$. Thus, $x_1$, $x_2$ and $y$ are 3-qualitatively independent. We need to show that $y$ is balanced.  
This corresponds to 
each matching $F_i$ contains either $\lfloor\frac{n}{h}\rfloor$ or $\lceil\frac{n}{h}\rceil$ hyperedges.   
Suppose  we have two matching $F_0$ and $F_1$  that differ by size more than 1, say $F_0$ smaller and $F_1$ larger. 
Every component of the union of $F_0$ and $F_1$ could be an alternating even cycle hypergraph or alternating path. 
 Note that it must contain a path, otherwise their sizes are equal. We can find an alternating path in the union hypergraph
that contains more edges from $F_1$ than $F_0$. Swap the $F_1$ edges with the $F_0$ edges in this alternating path. 
Then the resultant graph has $F_0$ 
increased in size by 1 hyperedge, and $F_1$ decreased in size by 1 hyperedge. Continue this process on $F_0$, $F_1$, \ldots, $F_{h-1}$  
until the sizes are correct.    
\end{proof}

 \begin{figure}[h]
 \centering
 \includegraphics[width=6cm]{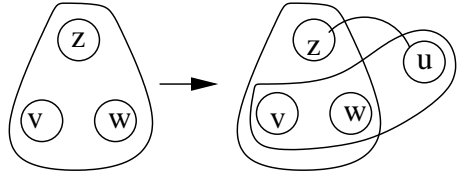}
 % cyc1.pdf: 244x227 pixel, 72dpi, 8.61x8.01 cm, bb=0 0 244 227
 \caption{Single-vertex hyperedge hooking II}
 \label{fig:xyz}
\end{figure}

\begin{proposition} \label{prop2} Let $H$ be a weighted hypergraph with $k$ vertices and $H'$ be the weighted 
hypergraph obtained from $H$ by single vertex hyperedge hooking II operation with $u$ as a new vertex with $w(u) $ such that $PW(H)=PW(H^{\prime})$ and for $w(u)\geq 3$
$$\lfloor\frac{n}{w(v)w(w)}\rfloor \equiv 0  \mbox{~~~mod~~} w(u).$$ 
 Then, there exists a balanced $CA( n, H, \prod_{i=1}^{k} g_{i})$ 
 if and only if there exists a balanced $CA(n, H^{'}, w(u)\prod_{i=1}^{k} g_{i})$.
\end{proposition}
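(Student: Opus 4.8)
The plan is to prove both implications, with the converse (from $H$ to $H'$) carrying all the weight; the forward implication is a one-line deletion argument. If a balanced $CA(n,H',w(u)\prod_{i=1}^k g_i)$ is given, then since the operation only adds a hyperedge $\{u,v,w\}$ and an edge $\{u,z\}$, both incident to the new vertex $u$, we have $E(H)\subseteq E(H')$ and no hyperedge of $H$ contains $u$. Hence deleting the row indexed by $u$ leaves every remaining row balanced and leaves the qualitative-independence and pairwise-balance conditions of each hyperedge of $H$ intact, producing a balanced $CA(n,H,\prod_{i=1}^k g_i)$.

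For the converse, let $C^H$ be a balanced $CA(n,H,\prod_{i=1}^k g_i)$ and let $\{v,w,z\}\in E(H)$ be the existing hyperedge on which the operation acts, with associated rows $x_1\in\mathbb Z_{w(v)}^n$, $x_2\in\mathbb Z_{w(w)}^n$, $x_3\in\mathbb Z_{w(z)}^n$. Because $C^H$ is a balanced covering array and $\{v,w,z\}$ is a hyperedge, the vectors $x_1,x_2,x_3$ are mutually pairwise balanced and $3$-qualitatively independent, which is exactly the input hypothesis of Lemma \ref{lemma3}. I would then apply Lemma \ref{lemma3} with $h=w(u)$ to manufacture a balanced vector $y\in\mathbb Z_{w(u)}^n$ that is $3$-qualitatively independent with $x_1,x_2$ and pairwise balanced with each of $x_1,x_2,x_3$; appending $y$ as the row of $u$ yields the candidate $C^{H'}$.

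The crux is matching the numerical hypotheses of Lemma \ref{lemma3} to what the proposition provides. I would label the vertices of the existing hyperedge so that $z$ is a minimum-weight vertex and set $g_1=\min\{w(v),w(w)\}$, $g_2=\max\{w(v),w(w)\}$, $g_3=w(z)$, so that $g_1g_2=w(v)w(w)$. The congruence $\lfloor n/(w(v)w(w))\rfloor\equiv 0\pmod{w(u)}$ for $w(u)\ge 3$ is then literally the divisibility hypothesis of Lemma \ref{lemma3}. For the inequality $w(u)\le\min\{\lfloor n/(g_1g_2)\rfloor,\lfloor n/(g_1g_3)\rfloor\}$ I would argue as follows: since a balanced covering array on $H$ exists we have $n\ge PW(H)$, and the equality $PW(H)=PW(H')$ forces the new hyperedge to satisfy $w(u)w(v)w(w)\le PW(H')=PW(H)\le n$, which gives $w(u)\le\lfloor n/(g_1g_2)\rfloor$; then, because $z$ was chosen of minimum weight, $w(z)\le g_2$, whence $w(u)g_1w(z)\le w(u)g_1g_2\le n$ and therefore $w(u)\le\lfloor n/(g_1g_3)\rfloor$. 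Verifying this second inequality is the main obstacle, since it is precisely where the choice of $z$ (rather than $v$ or $w$) as the edge-endpoint, together with $PW(H)=PW(H')$, is genuinely used; without the minimum-weight convention it need not hold.

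Finally I would check that the augmented array $C^{H'}$ is a balanced covering array on $H'$. The row $y$ is balanced by construction. For the new hyperedge $\{u,v,w\}$, the rows $\{x_1,x_2,y\}$ are $3$-qualitatively independent by Lemma \ref{lemma3}, and $y$ is pairwise balanced with $x_1$ and $x_2$ while $x_1,x_2$ are pairwise balanced in $C^H$, so all three rows are mutually pairwise balanced. For the new edge $\{u,z\}$, the rows $y$ and $x_3$ are pairwise balanced by Lemma \ref{lemma3}, and since $n\ge w(u)w(z)$ pairwise balance yields $2$-qualitative independence. Every remaining hyperedge of $H'$ lies in $H$ and is untouched by appending $y$, so its conditions are inherited from $C^H$. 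Hence $C^{H'}$ is a balanced $CA(n,H',w(u)\prod_{i=1}^k g_i)$, completing the converse and the proposition.
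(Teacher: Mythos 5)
Your overall route is the paper's: the forward direction is row deletion, and the converse appends a row for $u$ produced by Lemma \ref{lemma3} applied to the rows of the existing hyperedge $\{v,w,z\}$ with $h=w(u)$. Where you go beyond the paper is in trying to actually verify the numerical hypotheses of Lemma \ref{lemma3}, and that is where there is a genuine gap. You claim you may ``label the vertices of the existing hyperedge so that $z$ is a minimum-weight vertex.'' But $z$ is not yours to choose: the single-vertex hyperedge hooking II operation specifies that $z$ is the vertex of the existing hyperedge joined to $u$ only by the new $2$-edge, while $v,w$ are the two forming the new hyperedge with $u$; and since Lemma \ref{lemma3} returns a $y$ that is $3$-qualitatively independent with $x_1,x_2$ but not with $x_3$, the role of $x_3$ is forced onto the row of $z$ --- only $x_1$ and $x_2$ may be permuted. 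In the paper's own application (Theorem \ref{thm5} with $g_0=10$, $g_1=8$, $g_2=5$, $g_3=2$) one has $u=v_3$, $z=v_0$, so $w(z)=10$ is the \emph{largest} of the three weights and your inequality $w(z)\le g_2=8$ is false. Consequently your derivation of $w(u)\le\lfloor n/(g_1g_3)\rfloor$ collapses, and it cannot be repaired from the stated hypotheses alone: take $w(v)=w(w)=2$, $w(z)=10$, $w(u)=10$, $n=PW(H)=40$; then $PW(H')=PW(H)$ and $\lfloor 40/4\rfloor=10\equiv 0 \bmod 10$, yet $\lfloor n/(g_1g_3)\rfloor=\lfloor 40/20\rfloor=2<w(u)$, so Lemma \ref{lemma3} does not apply.

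To be fair, the paper's own proof does not derive this inequality either; it simply posits that $w(u)w(v)w(w)\le n$ and $w(u)w(z)\le n$ and invokes Lemma \ref{lemma3} (and even $w(u)w(z)\le n$ is weaker than what the lemma needs). The missing condition is $w(u)\cdot\min\{w(v),w(w)\}\cdot w(z)\le n$, which Theorem \ref{thm5} supplies separately through the hypothesis $g_3\le\max\{g_1,g_2\}$ (with $n=g_0g_1g_2$ this is exactly $w(u)\le\lfloor n/(g_1g_3)\rfloor$). The honest fix is to add this as an explicit hypothesis of the proposition rather than to argue it follows from $PW(H)=PW(H')$. The rest of your write-up --- the deletion direction, the verification that the new hyperedge's three rows are mutually pairwise balanced and $3$-qualitatively independent, and the observation that pairwise balance plus $n\ge w(u)w(z)$ gives $2$-qualitative independence on the new edge --- is correct and consistent with the paper.
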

\begin{proof}
  If there exists a balanced CA($ n, H', w(u)\prod_{i=1}^{k} g_{i} $) then by deleting the row 
 corresponding to the new vertex $u$ we can obtain a $CA(n, H, \prod_{i=1}^{k} g_{i})$. 
 Conversely, let $ C^{H} $ be a balanced CA($ n, H, \prod_{i=1}^{k} g_{i} $).   If $H^{\prime}$ is obtained from $H$ by a single vertex hyperedge hooking II of a new vertex $u$ with a new hyperedge $\{u,v,w\}$ and a new edge $\{u,z\}$ where $\{ v,w,z\}$ is an existing hyperedge in $H$ and $w(u)$ such that $w(u)w(v)w(w)\leq n$ and $w(u)w(z)\leq n$. Using Lemma \ref{lemma3}, we can build a length-$n$
 vector $y$ such that $\{y,x_1,x_2\}$ is 3-qualitatively independent and $y$ is pairwise balanced with $x_1,x_2,x_3$, where $x_1,x_2,x_3$ are length-$n$ vectors correspond 
 to vertices $v,w,z$ respectively. The array $C^{H^{\prime}}$ is obtained by appending row $y$ to $C^H$. 
 \end{proof}
\begin{theorem}\label{thm5}
 Let $H$ be a weighted 3-uniform cycle $(v_0,E_1,v_2,E_2,v_3,E_3,v_0)$ of length-3 on five vertices 
 with $E_1=\{v_0,v_1,v_2\}$, $E_2=\{v_{1},v_2,v_{3}\}$ and $E_3=\{v_3,v_4,v_0\}$. Let $g_i$ denote the weight of vertex $v_i$. 
  Let $E_1$ be a hyperedge in $H$ with $g_0g_1g_2=PW(H)$.
  If $g_0\equiv 0 \mod g_3$ and    
  $g_3\leq \min \{ g_0, \max \{ g_1, g_2\} \}$
 then there exists a balanced 
 3-$CA( n, H, \prod_{i=0}^{4} g_{i} )$ with $n=PW(H)$. 
 \end{theorem}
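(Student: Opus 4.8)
The plan is to build the array one hyperedge at a time, in the order $E_1,E_2,E_3$, exactly as in the informal discussion preceding this subsection, and to repair the single obstruction (that the rows of $v_0$ and $v_3$ need not be pairwise balanced) by carrying along an auxiliary edge $\{v_0,v_3\}$ produced by the single-vertex hyperedge hooking II operation. Throughout I take $n=PW(H)=g_0g_1g_2$. Since $E_2$ has product weight $g_1g_2g_3$ and $E_3$ has product weight $g_0g_3g_4$, the hypotheses $g_3\le g_0$ and $g_0g_3g_4\le g_0g_1g_2$ ensure that $PW$ is preserved at every stage, so each intermediate hypergraph has product weight $n$ and the hooking propositions apply.

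First I would start with $H_1=(\{v_0,v_1,v_2\},\{E_1\})$ and take the trivial balanced $3\text{-}CA(n,H_1,g_0g_1g_2)$ obtained by listing each element of $\mathbb Z_{g_0}\times\mathbb Z_{g_1}\times\mathbb Z_{g_2}$ once; every row is balanced and every pair of rows is pairwise balanced because $n=g_0g_1g_2$. Next I would introduce $v_3$ together with the hyperedge $E_2=\{v_1,v_2,v_3\}$ and the auxiliary edge $\{v_0,v_3\}$ by a single-vertex hyperedge hooking II applied to the \emph{existing} hyperedge $E_1=\{v_1,v_2,v_0\}$, taking the two hyperedge-partners to be $v_1,v_2$ and the edge-partner to be $z=v_0$. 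This is where the assignment matters: in the notation of Lemma \ref{lemma3} I would let $x_1$ be the row of the lighter and $x_2$ the row of the heavier of $v_1,v_2$, and $x_3$ the row of $v_0$, so that the product of the first two weights is $\min\{g_1,g_2\}\cdot\max\{g_1,g_2\}=g_1g_2$ and the second quotient becomes
$$\Big\lfloor\frac{n}{\min\{g_1,g_2\}\,g_0}\Big\rfloor=\frac{g_1g_2}{\min\{g_1,g_2\}}=\max\{g_1,g_2\}.$$
Then Lemma \ref{lemma3} requires $g_3\le\min\{\lfloor n/g_1g_2\rfloor,\max\{g_1,g_2\}\}=\min\{g_0,\max\{g_1,g_2\}\}$ and, for $g_3\ge 3$, $\lfloor n/g_1g_2\rfloor=g_0\equiv 0\pmod{g_3}$; both are precisely the hypotheses of the theorem. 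Hence Proposition \ref{prop2} yields a balanced covering array on $H_2'=(\{v_0,\dots,v_3\},\{E_1,E_2,\{v_0,v_3\}\})$ whose row for $v_3$ is $3$-qualitatively independent with the rows of $v_1,v_2$ and pairwise balanced with the rows of $v_0,v_1,v_2$.

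The crucial gain is that this auxiliary edge has now made the rows of $v_0$ and $v_3$ pairwise balanced, which is exactly the property that was previously missing. I would therefore finish by adjoining $v_4$ and replacing the edge $\{v_0,v_3\}$ by the hyperedge $E_3=\{v_0,v_3,v_4\}$ via a single-vertex hyperedge hooking I (Case 3 of Proposition \ref{prop1}, which invokes Lemma \ref{lemma2}); this is legitimate because $\{v_0,v_3\}$ is a genuine edge of $H_2'$ with pairwise balanced rows, and the size requirement $g_0g_3g_4\le n$ holds since $g_0g_3g_4=PW(E_3)\le PW(H)=n$. This produces a balanced covering array of size $n$ on $H''=(\{v_0,\dots,v_4\},\{E_1,E_2,E_3,\{v_0,v_3\}\})$. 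Deleting the auxiliary edge $\{v_0,v_3\}$ leaves the hypergraph $H$, and a balanced covering array on $H''$ is \emph{a fortiori} a balanced covering array on the subhypergraph $H$ (same rows, same alphabets, and every hyperedge of $H$ is a hyperedge of $H''$). Thus we obtain a balanced $3\text{-}CA(n,H,\prod_{i=0}^4 g_i)$ with $n=PW(H)$, which is optimal by the lower bound $3\text{-}CAN(H,\prod g_i)\ge PW(H)$.

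The main obstacle is the middle step. One has to recognize that the new operation must be hung off $E_1$ rather than off $E_2$ or off the pair $\{v_0,v_3\}$, where no suitable existing hyperedge containing both $v_0$ and $v_3$ is available, so that the built-in auxiliary edge of the hooking lands exactly on $\{v_0,v_3\}$; and one must choose which of $v_1,v_2$ plays the distinguished first role in Lemma \ref{lemma3} so that the weight requirement reads $\max\{g_1,g_2\}$ rather than $\min\{g_1,g_2\}$. Checking that the size and divisibility hypotheses of Lemma \ref{lemma3} collapse, under $n=g_0g_1g_2$, to precisely $g_3\le\min\{g_0,\max\{g_1,g_2\}\}$ and $g_0\equiv 0\pmod{g_3}$ is the heart of the argument; the first and last steps are then routine applications of the established operations.
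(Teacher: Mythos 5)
Your proposal is correct and follows essentially the same route as the paper: start from the trivial balanced array on $E_1$, attach $v_3$ via single-vertex hyperedge hooking II (adding $E_2$ and the auxiliary edge $\{v_0,v_3\}$, with Proposition \ref{prop2} applying precisely because the theorem's divisibility and size hypotheses match the conditions of Lemma \ref{lemma3}), then replace $\{v_0,v_3\}$ by $E_3$ via single-vertex hyperedge hooking I. Your verification that the hypotheses of Lemma \ref{lemma3} reduce to $g_3\leq\min\{g_0,\max\{g_1,g_2\}\}$ and $g_0\equiv 0\pmod{g_3}$, including the choice of which of $v_1,v_2$ plays the first role, is carried out in more detail than in the paper itself.
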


\begin{proof} Let $H_1$ be a hypergraph with single hyperedge $E_1$. There exists a balanced  3-$CA(n,H_1, \prod_{i=0}^2 g_i)$.
Let $H_2=H_1\cup\{E_2, \{v_0,v_3\} \}$.  From Proposition \ref{prop2}, as $g_0\equiv 0 \mod g_3$ and    
  $g_3\leq \min \{ g_0, \max \{ g_1, g_2\} \}$,
 there exists  a balanced 3-$CA(n,H_2, \prod_{i=0}^3 g_i)$. Let $H_3$ be the hypergraph 
obtained from $H_2$ by replacing edge $\{v_0,v_3\}$ by hyperedge $\{v_0,v_3,v_4\}$. Note that $H_2=H$. As $g_0g_3g_4\leq n$, using single-vertex 
hyperedge hooking I operation, we get  a balanced covering array 3-$CA(n,H, \prod_{i=0}^4 g_i)$
\end{proof}

\section{Conclusions and Open Problems} In this paper, we study construction of optimal mixed covering arrays on 3-uniform hypergrahs.  This  paper extends the work done by  Meagher,  Moura, and  Zekaoui \cite{mixed} for mixed covering arrays on graph to mixed covering arrays on hypergarphs. We gave five hypergraph operations that enable us to add 
new vertices, edges and hyperedges to a hypergraph. These operations have no effect on the covering array number
of the modified hypergraph. Using these hypergraph operations, we build optimal mixed covering arrays for special classes of hypergraphs, e.g., 3-uniform $\alpha$-acyclic hypergraphs, 
3-uniform interval hypergraphs, 3-uniform conformal hypertrees, and specific 3-uniform cycles. The five basic hypergraph operations introduced here may be useful 
for obtaining optimal mixed covering arrays on other classes of hypergraphs.    It is an interesting open problem to find  optimal mixed covering arrays on  conformal hypergraphs, tight cycle hypergraphs, Steiner triple systems, etc. 
 \\

\noindent{ \bf Acknowledgement:} We are grateful to Jaikumar Radhakrishanan, Tata Institute of Fundamental Research, Mumbai,  and Sebastian Raaphorst, University of Ottawa, for useful discussions and their comments 
on the proofs of Lemma \ref{lemma1} and Lemma \ref{lemma2}. 
The first author gratefully acknowledges support from the Council of Scientific and Industrial Research (CSIR), India, during the work 
under CSIR senior research fellow scheme.


\begin{thebibliography}{40}
 
  \bibitem{bee} C. Beeri, R. Fagin, D. Maier, and  M. Yannakakis, On the desirability of acyclic database schemes, Journal of the ACM 30 (1983), 479-513. 
 
 \bibitem{berge} C. Berge, Hypergraphs- Combinatorics of Finite Sets, North-Holland Mathematical Library 45, 1989.

\bibitem{kapoor} M. Conforti, G. Cornuejols, A. Kapoor, and  K. Vuskovic, Perfect matchings in balanced hypergraphs, Combinatorica 16 (3) (1996), 325-329. 

\bibitem{chatea} M. A. Chateauneuf, C. J. Colbourn, and  D. L. Kreher, Covering Arrays of Strength Three, Designs, Codes and Cryptography 16 (1) (1999), 235-242.

\bibitem{kreher} M. A. Chateauneuf, and  D. L. Kreher, On the State of Strength-Three Covering Arrays, J. Combin. Designs 10 (4) (2002), 217–-238.

\bibitem{cheng} C. Cheng, and  A. Dumitrescu, and P. Schroeder, Generating small combinatorial test suites to cover input-output relationships, Proc. 3rd Intern. Conference
on Quality Software, 2003, 76-82.

\bibitem{Cohen}  D. M. Cohen, S. R. Dalal, M. L. Fredman, and  G. C. Patton, The AETG system: An Approach to Testing Based on Combinatorial Design, IEEE Transaction on Software 
Engineering 23 (7) (1997), 437-443.



\bibitem{patton} D. M. Cohen, S. R. Dalal, J. Parelius, and G. C. Patton, 
The combinatorial design approach to automatic test generation, IEEE Software 13 (5) (1996), 83-88. 

\bibitem{Col1} C. J. Colbourn, Combinatorial aspects of covering arrays, Le Matematiche (Catania) 58 (2004), 121-167. 

\bibitem{Colbourn} C. J. Colbourn, and  J. Dinitz, The CRC Handbook of Combinatorial Designs, CRC Press. 1996.

\bibitem{c.j}  C. J. Colbourn, S. S. Martirosyan, G. L. Mullen, D. E. Shasha, G. B. Sherwood, and J. L. Yucas,
Products of mixed covering arrays of strength two, J. Combin. Designs 14 (2006), 124-138.

\bibitem{phil} P. J\'{e}gou, and S. N. Ndiaye, On the notion of cycles in hypergraphs, Discrete Mathematics 309 (2009), 6535-6543. 


\bibitem{G} M. H. Graham, On the Universal Relation, Technical report. University of Toronto, Toronto, Ontario, Canada. 1979.

\bibitem{Hartman} A. Hartman, Software and hardware testing using combinatorial covering suites
 in Graph Theory, Combinatorics and Algorithms, Interdisciplinary Applications, Kluwer Academic Publishers 34 (2006) 237-266. 

\bibitem{HartmanDM} A. Hartman, and  L. Raskin, Problems and algorithms for covering arrays, Discrete Mathematics 284 (2004), 149-156.


\bibitem{heday} A. S. Hedayat, N. J. A. Sloane, and J. Stufken, Orthogonal Arrays-Theory and Applications, Springer, 1999.

\bibitem{maity1} S. Maity, 3-Way Software Testing with Budget Constraints, IEICE Trans. Information and Systems, E95-D (9) (2012), 2227-2231.

\bibitem{maity2} S. Maity, Software testing with budget constraints, Proc. 9th IEEE International Conference on Information Technology New Generation (ITNG12), 2012, 258-268.

\bibitem{maity} S. Maity, and  A. Nayak, Improved test generation algorithms for pair-wise testing, Proc. 16th IEEE International Symposium on Software Reliability Engineering (ISSRE05), 2005, 235-244.

\bibitem{mixed} K. Meagher, L. Moura, and Latifa Zekaoui, Mixed Covering arrays on Graphs, J. Combin. Designs 15 (2007), 393-404.

\bibitem{karen} K. Meagher, and B. Stevens, Covering Arrays on Graphs, J. Combin. Theory, Series B 95 (2005), 134-151.

\bibitem{moura} L. Moura, J. Stardom, and  B. Stevens, A. Williams, Covering arrays with mixed alphabet sizes, J. Combin. Designs 11 (2003), 413-432.

\bibitem{sebastian} S. Raaphorst, Variable strength covering arrays, PhD Thesis, University of Ottawa, Ottawa, 2013.

\bibitem{ser} G. Seroussi, and  N. H. Bshouty, Vector sets for exhastive testing of logic circuits, IEEE Trans. on Infor. Theory 34 (1988), 513-522.

\bibitem{ste} B. Stevens, Transversal covers and packings, PhD Thesis, University of Toronto, Toronto, 1998.

\bibitem{vitaly} V. I. Voloshin, Introduction to Graph and Hypergraph Theory, Nova Science Publishers, Inc. 2009.

\bibitem{Y} C. T. Yu, and M. Z. Ozsoyoglu, An algorithm for tree-query membership of distributed query. COMPSAC 79, IEEE Computer Society, 
e Proc. of the IEEE Computer Software and Applications Conf, 1979, 306-312. 

 
 
  


\end{thebibliography}
\end{document}